\newcommand{\RP}{\right\rangle}
\newcommand{\LP}{\left\langle }
\newtheorem{theorem}{Theorem}
\newtheorem{corollary}{Corollary}
\newtheorem{proposition}{Proposition}
\newtheorem{definition}{Definition}
\newtheorem{remark}{Remark}
\newenvironment{proof}{\noindent {\textit{Proof}}
}{$\Box$\medskip}
\begin{document}

\title{Photon regions and umbilic conditions in stationary axisymmetric spacetimes}

\author{K.V.~Kobialko} \email{kobyalkokv@yandex.ru} \author{D. V. Gal'tsov} \email{galtsov@phys.msu.ru}
\affiliation{Faculty of Physics, Moscow State University, 119899, Moscow, Russia}








\begin{abstract}
Photon region (PR) in the strong gravitational field is defined as a compact region where photons can travel endlessly without going to infinity or disappearing at the event horizon. In Schwarzschild metric PR degenerates to the two-dimensional photon sphere $r=3r_g/2$ where closed circular photon orbits are located. 
The photon sphere as a three-dimensional hypersurface in spacetime is umbilic (its second quadratic form is pure trace).
In Kerr metric the equatorial circular orbits have different radii for prograde, $r_p$, and retrograde, $r_r$, motion (where $r$ is Boyer-Lindquist radial variable), while   for $r_p<r<r_r$ the spherical orbits with constant $r$ exist which are no more planar, but filling some spheres. These spheres, however, do not correspond to umbilic hypersurfaces. In more general stationary axisymmetric spacetimes not allowing for complete integration of geodesic equations, the numerical integration show the existence of PR as well, but the underlying geometric structure was not fully identified so far. Here we suggest geometric description of PR in generic stationary axisymmetric spacetimes, showing that PR can be foliated by  {\em partially umbilic hypersurfaces}, such that the umbilic condition holds for classes of orbits defined by the foliation parameter. New formalism opens a way of analytic description of PR in stationary axisymmetric spacetimes with non-separable geodesic equations.
\end{abstract}


\maketitle

\setcounter{page}{2}

\section{Introduction}
\label{intro}
Formation of shadows of spherically symmetric black holes is closely related to photon spheres, where closed circular photon orbits are located. The shadow corresponds to the set of light rays that inspiral asymptotically onto the light rings \cite{Virbhadra:1999nm}. In the Schwarzcshild case the photon sphere has the radius 
$r=3r_g/2$ and it is densely filled by light rings located at different values of the polar angle $\theta$.

In non-spherical static spacetimes, properties of the photon spheres can be  shared by the {\em photon surfaces} of non-spherical form. In this case \cite {Claudel:2000yi} one deals with  a closed timelike hypersurface such that any null geodesic initially tangent to it remains in it forever. Several examples of spacetimes have been found that allow for non-spherical  photon  surfaces, which are not necessarily asymptotically flat
(vacuum C-metric, Melvin's solution of Einstein-Maxwell theory and its generalizations including the dilaton field \cite{Gibbons}).

Mathematically, an important property of the photon surfaces is established by the theorem asserting that these  are  conformally invariant and totally umbilical hypersurfaces in spacetime \cite{Chen,Okumura,Senovilla:2011np}. This means that their second fundamental form is pure trace, i.e. is proportional to the induced metric.   
\begin{equation}
\sigma(u,v)=H \LP u,v\RP, \quad \forall u,v \in TS.
\label{01}
\end{equation} 
This property may serve a constructive definition of PS, instead of direct solving the geodesic equations. It is especially useful in the cases when the geodesic equations are non-separable, so no analytic solution can be found.

Situation becomes more complicated in stationary axisymmetric spacetimes with rotation, when circular orbits typically exist in the equatorial plane in presence of $Z_2$ symmetry
$\theta\to \pi-\theta$. In the Kerr metric the prograde and retrograde equatorial light
rings have different radii $r_p,\,r_r,\; r_p<r_r$, where $r$ is the Boyer-Lindquist coordinate. Due to existence of the Carter integral, the geodesic equations give rise to independent equations for $r$ and $\theta$ motion, from which one finds that the orbits with constant $r$ exist in the interval $r_p<r<r_r$ for which $\theta$ oscillates between some bounds, so that the orbits lie on the (part of) some spherical surface (spherical orbits, \cite{Wilkins:1972rs,Teo}). The whole set of these surfaces constitute a volume region known as Photon region (PR) \cite{Grenzebach,Grenzebach:2015oea,Grover:2017mhm}. The photon region is equally important in determination of black hole shadow as the photon sphere in the spherically symmetric case (for recent review of strong gravitational lensing and shadows see \cite{Cunha:2018acu,Shipley:2019kfq})
 
Obviously, the existence of the photon sphere is related to spherical symmetry of spacetime. It is worth noting, that the photon sphere is not destroyed by the Newman-Unti-Tamburino (NUT) parameter, in which case the $so(3)$ algebra still holds  locally, though metric is already non-static. With this exception, stationary metrics with true rotation do not admits photon spheres or more general photon surfaces. In static spacetime  various uniqueness theorems were formulated in which an assumption of the existence of a regular horizon was replaced by an assumption of  existence of a photon sphere \cite{Cederbaum,Yazadjiev:2015hda,Yazadjiev:2015mta,Yazadjiev:2015jza,Rogatko,Cederbaumo,Yoshino:2016kgi}. No such general results are available for stationary spacetimes. So the problem of optical characterization of stationary metrics which we discuss in this paper remains relevant. Mention an interesting relation between the separability of spacetime and properties of the circular and the spherical photon orbits discovered recently. Namely, a spacetime is non-separable, if there exist an equatorial circular orbit and, at the same time, there are no spherical orbits beyond the equatorial plane \cite{Pappas:2018opz,Glampedakis:2018blj}. This property may serve a diagnostic of the non-Kerr nature of spacetime.
 
As is well known, in rotating spacetimes the photon orbits with constant Boyer-Lindquist radius
may exist as well (e.g. spherical orbits in Kerr  \cite{Wilkins:1972rs,Teo}), but they do not fill densely the photon spheres, since their existence requires certain relation between the constants of motion. Such orbits fill the three-dimensional volumes --- the {\em photon regions} \cite{Grenzebach,Grenzebach:2015oea}. The corresponding region of spacetime can be interpreted as a set of non-closed timelike hypersurfaces, parameterized by the value of the azimuthal impact parameter $\rho=L/E$, where $L,\, E$ are the motion integrals corresponding to timelike and azimuthal Killing vectors \cite{Galtsov:2019bty,Galtsov:2019fzq}.

In more general stationary axisymmetric spacetimes, the photon orbits which fill some compact region  were called {\em fundamental photon orbits} (FPO) \cite{Cunha:2017eoe}. Their explicit determination, however, become difficult if geodesic equations are non-separable. In such cases the phase space can have chaotic domains \cite{Cornish:1996de,Cunha:2016bjh,Semerak:2012dw, Shipley:2016omi,Cunha:2018gql} and analytical determination of PRs was not given so far.
The purpose of the present paper is to fill this gap. We suggest a new definition of {\em partially umbilic} hypersurfaces relaxing the condition (\ref{01}). Namely, one can impose the condition (\ref{01}) not on {\em all} vectors from the tangent space (TS), but only on some subset of TS, specified by the azimuthal impact parameter. In the Kerr metric, the sphere on which the spherical photon orbits wind are just the spatial sections of such hypersurfaces. In more general cases their spatial sections may have arbitrary shape but must be compact. Having definition of PR in terms of hypersurface but not photon orbits helps to find them in the case of geodesically non-separable spacetimes. Note that our method has common features with the recent idea to analyse PR in the Kerr metric from the point of view of the structure of the tangent  put forward in \cite{Cederbaum:2019vwv}       

The paper consists of two parts. The Section \ref{S1} contains a geometric formulation of the concept of a fundamental photon submanifolds. In Subsection \ref{SS1} we describe splitting of the phase space into sectors specified by the azimuthal impact parameter $\rho=L/E$  of the geodesics. Then in Subsection \ref{SS2} we introduce  the concept of partial umbilical submanifolds on a limited phase space, and define the {\em fundamental photon hypersurfaces} (FPH) on which the FPO are located.
Then in Subsection \ref{SS3} we give structure equations for the main curvatures of the spatial section of FPH and show that such section for convex FPH with  $\rho=0$ have the topology of the sphere $ \mathbb S^2 $ under some assumptions  on the tensor energy-momentum of matter.
The Section \ref{S2} contains geometrical definition of photon region (Subsection \ref{SS5}) and applications of the new formalism to three exact solutions: Kerr in Subsection \ref{SS7}, Zipoy-Voorhees with $\delta=2$ \cite{Zipoy,Voorhees:1971wh,Griffiths,Kodama:2003ch} in Subsection \ref{SS8} , and Tomimatsu-Sato  $\delta=2$ solution \cite{Kodama:2003ch} in Subsection \ref{SS9}. It is shown that they have PRs of three different types, and as a consequence, different  optical shadows  \cite{Grenzebach, Abdikamalov:2019ztb, Bambi:2010hf,Galtsov:2019fzq}). For some of them the mapping $\rho\rightarrow PR$ not always univalent, contrary to the Kerr case  \cite{Grenzebach,Grenzebach:2015oea}.

\section{Geometry of the Fundamenthal Photon Submanifolds} 
\label{S1}

\setcounter{equation}{0}

\subsection{Geometry of the phase space} 
\label{SS1}
Consider stationary axisymmetric spacetime containing the PR sector defined as a compact region containing worldlines of photons  moving indefinitely along periodic or non-periodic orbits (FPOs). Any such orbit is characterized by two integrals of motion, $E,\,L$ whose ratio $\rho=L/E$ is an azimuthal impact parameter. In view of axial symmetry, the worldlines with fixed $\rho$, forming a set of FPOs, lie on some hypersurface in spacetime which we will call fundametal photon hypersurface (FPH). The totality of FPOs, filling the entire PR, will be the union of such FPH with different $\rho$. A further step consists in considering the corresponding structures in the phase space of the geodesic system. A similar design has been proposed in  \cite{Cederbaum:2019vwv} to investigate geometry and topology of PR in Kerr gravitational field. Our purpose here is to investigate the phase space structure of PRs in more general stationary axisymmetric spacetimes based on previous work   
 \cite{Grenzebach,Grenzebach:2015oea,Grover:2017mhm,Galtsov:2019bty,Galtsov:2019fzq}. 
 
Let $M$ be an $m$-dimensional Lorentzian manifold \cite{Chen} endowed with a  non-degenerate scalar product  $\LP \; ,\; \RP$, a tangent bundle   $TM$ and supposed to possess two commuting Killing vector fields $\hat{\tau}$ and  $\hat{\varphi}$ defining a stationary axisymmetric spacetime.  Define a one-parametric family of Killing vectors $\left\{\hat{\kappa}_\rho\right\}$ as linear combination:   
\begin{align}
&\hat{\kappa}_\rho=\rho \hat{\tau}+\hat{\varphi},\\
&\LP \hat{\kappa}_\rho,\hat{\kappa}_\rho \RP=\LP\hat{\varphi},\hat{\varphi}\RP+2\LP \hat{\tau},\hat{\varphi} \RP \rho+\LP\hat{\tau},\hat{\tau}\RP\rho^2,
\label{b1} 
\end{align}       
where $\rho$ is still an arbitrary parameter. Generically, the scalar product  $\LP \hat{\kappa}_\rho,\hat{\kappa}_\rho \RP$ has no definite sign on the total manifold $M$ and even can be zero. So it is natural in introduce a partition $M=U^+_\rho\cup U^-_\rho\cup U^0_\rho$, so that
\begin{align}
a) \quad U^+_\rho  \subset M: \LP \hat{\kappa}_\rho,\hat{\kappa}_\rho\RP>0,\\
b) \quad U^-_\rho  \subset M: \LP \hat{\kappa}_\rho,\hat{\kappa}_\rho\RP<0,\\ c) \quad U^0_\rho  \subset M: \LP \hat{\kappa}_\rho,\hat{\kappa}_\rho\RP=0. 
\end{align}
 Now we are able to introduce the key notion of orthogonal complement \cite{Chen} to the set $\left\{\hat{\kappa}_\rho\right\}$, and construct a natural basis on it. This will be useful for subsequent formulation of the theorems and construction of the structure equations for FPHs.   
  
\begin{definition}
We will call an orthogonal complement $\hat{\kappa}_\rho$ a set $\hat{\kappa}^{\bot}_\rho\subset TM$, such that for all $p\in M$ 
\begin{align}
\hat{\kappa}^{\bot}_\rho|_p=\left\{v_p\in T_pM :\LP v_p ,\hat{\kappa}_\rho|_p\RP=0\right\}.
\end{align}
\label{D1}
\end{definition} 
 
\begin{proposition}
If the pull-back of the scalar product $\LP \; ,\; \RP$ on the subspace $\left\{ \hat{\tau},\hat{\varphi}\right\}$ is non-degenerate and has the signature   ($-,+$), in the orthogonal complement $\hat{\kappa}^{\bot}_\rho$ there exists an orthogonal basis $\left\{e_{\rho},e_a\right\}$, such that $\left\{e_a\right\}$ is a ortonormal basis in   $\left\{\hat{\tau},\hat{\varphi}\right\}^{\bot}$, while the vector field $e_{\rho}$

a) everywhere timelike for $U^{+}_\rho$,

b) everywhere spacelike for $U^{-}_\rho$,

c) everywhere null and proportional to $\hat{\kappa}_\rho$ for $U^{0}_\rho$.
\label{P1}
\end{proposition}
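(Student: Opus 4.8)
The plan is to reduce the statement to linear algebra on the two–plane $\Pi_p:=\mathrm{span}\{\hat{\tau}|_p,\hat{\varphi}|_p\}\subset T_pM$ at a generic point $p$, combined with the orthogonal splitting that a non-degenerate subspace induces. By hypothesis the pull-back of $\LP\,,\,\RP$ to $\Pi_p$ is non-degenerate, so $\hat{\tau}|_p,\hat{\varphi}|_p$ are independent and $T_pM=\Pi_p\oplus\Pi_p^{\bot}$ is an orthogonal direct sum; since $\Pi_p$ carries signature $(-,+)$ and $M$ is Lorentzian, $\Pi_p^{\bot}$ is positive definite of dimension $m-2$. Gram--Schmidt then produces a locally smooth orthonormal frame $\{e_a\}_{a=1}^{m-2}$ of $\{\hat{\tau},\hat{\varphi}\}^{\bot}=\Pi^{\bot}$, and since $\hat{\kappa}_\rho=\rho\hat{\tau}+\hat{\varphi}\in\Pi$, each $e_a$ automatically obeys $\LP e_a,\hat{\kappa}_\rho\RP=0$.

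The remaining basis vector is built inside $\Pi$. Writing $g_{\tau\tau}=\LP\hat{\tau},\hat{\tau}\RP$, $g_{\tau\varphi}=\LP\hat{\tau},\hat{\varphi}\RP$, $g_{\varphi\varphi}=\LP\hat{\varphi},\hat{\varphi}\RP$ and $N=g_{\tau\tau}g_{\varphi\varphi}-g_{\tau\varphi}^2$ (the Gram determinant, which is $<0$ precisely because the signature is $(-,+)$, so in particular $N\neq0$), I would set
\begin{equation}
e_{\rho}=(\rho g_{\tau\varphi}+g_{\varphi\varphi})\,\hat{\tau}-(\rho g_{\tau\tau}+g_{\tau\varphi})\,\hat{\varphi},
\end{equation}
which is one convenient spanning vector of the one-dimensional space $\Pi\cap\hat{\kappa}_\rho^{\bot}$. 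A one-line expansion gives $\LP e_{\rho},\hat{\kappa}_\rho\RP=0$, and $e_\rho$ vanishes only where the Gram matrix annihilates $(\rho,1)$, i.e. nowhere since that matrix is invertible. Hence $\{e_\rho,e_a\}$ are mutually orthogonal (recall $e_\rho\in\Pi$, $e_a\in\Pi^{\bot}$) and all lie in $\hat{\kappa}^{\bot}_\rho$; being $m-1$ independent vectors they span it, because $\hat{\kappa}_\rho\neq0$ forces $\dim\hat{\kappa}^{\bot}_\rho=m-1$. This already produces the asserted orthogonal basis.

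It then remains to read off the causal type of $e_\rho$. Direct substitution yields the identity
\begin{equation}
\LP e_{\rho},e_{\rho}\RP=N\,\LP\hat{\kappa}_\rho,\hat{\kappa}_\rho\RP .
\end{equation}
Since $N<0$, the sign of $\LP e_\rho,e_\rho\RP$ is the opposite of that of $\LP\hat{\kappa}_\rho,\hat{\kappa}_\rho\RP$: on $U^{+}_\rho$ one gets $\LP e_\rho,e_\rho\RP<0$ (timelike), on $U^{-}_\rho$ one gets $\LP e_\rho,e_\rho\RP>0$ (spacelike), and on $U^{0}_\rho$ one gets $\LP e_\rho,e_\rho\RP=0$; in the last case $e_\rho$ is a non-zero null vector of $\Pi$ orthogonal to the null vector $\hat{\kappa}_\rho$, and since a null line in a Lorentzian two-plane is its own orthogonal complement, $e_\rho$ is necessarily proportional to $\hat{\kappa}_\rho$. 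This establishes a)--c).

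The argument is largely bookkeeping; the one point that genuinely needs care is the sign in the last paragraph — that the factor $N<0$ reverses the relation between the causal characters of $e_\rho$ and $\hat{\kappa}_\rho$, which is exactly what makes items a) and b) come out as stated (one might naively expect them interchanged). A secondary, purely technical point is that the frame $\{e_a\}$ is only guaranteed smooth locally, on coordinate patches, which is all that is needed for the subsequent structure equations.
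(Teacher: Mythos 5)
Your proposal is correct and follows essentially the same route as the paper: the same choice $e_\rho=A_\rho\hat{\tau}-B_\rho\hat{\varphi}$ with $A_\rho=\rho\LP\hat{\tau},\hat{\varphi}\RP+\LP\hat{\varphi},\hat{\varphi}\RP$, $B_\rho=\rho\LP\hat{\tau},\hat{\tau}\RP+\LP\hat{\tau},\hat{\varphi}\RP$, and the same key identity $\LP e_\rho,e_\rho\RP=N\LP\hat{\kappa}_\rho,\hat{\kappa}_\rho\RP$ with the Gram determinant $N<0$ controlling the signs in a) and b). The only (cosmetic) divergence is in c): the paper exhibits the proportionality factor explicitly by solving $\LP\hat{\kappa}_\rho,\hat{\kappa}_\rho\RP=0$ for $\LP\hat{\tau},\hat{\varphi}\RP$ (treating $\rho=0$ separately), whereas you invoke the cleaner linear-algebra fact that a null line in a Lorentzian two-plane is its own orthogonal complement within that plane, which avoids the case split.
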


\begin{proof}
Introduce the basis in the tangent bundle $TM$ as a set $\left\{\hat{\tau}, \hat{\varphi} ,e_a\right\}$, where $e_a$ an orthonormal basis of the space $\left\{\hat{\tau},\hat{\varphi}\right\}^{\bot}$, which always exists due to non-degeneracy of the pull-back $\LP \; ,\; \RP$ on the subspace $\left\{ \hat{\tau},\hat{\varphi}\right\}$. Clearly, all $e_a$ by construction satisfy the orthogonality condition $\LP \hat{\kappa}_\rho, e_a\RP=0$. Thus it remains to find the vector $e_\rho$ in $\left\{\hat{\tau},\hat{\varphi}\right\}$ such that $\LP \hat{\kappa}_\rho ,e_\rho \RP=0$. The solution is a linear combination of the basis vectors 
\begin{align}
&e_\rho=A_\rho \hat{\tau}-B_\rho \hat{\varphi},  \label{b4a} \\
&A_\rho=\LP\hat{\varphi},\hat{\varphi}\RP +\rho \LP\hat{\tau},\hat{\varphi}\RP, \quad B_\rho=\rho\LP\hat{\tau},\hat{\tau}\RP+\LP\hat{\tau},\hat{\varphi}\RP,\\
&\LP e_\rho  ,e_\rho \RP=\left\{\LP\hat{\tau},\hat{\tau}\RP\LP\hat{\varphi},\hat{\varphi}\RP-\LP\hat{\tau},\hat{\varphi}\RP^2\right\}\LP \hat{\kappa}_\rho,\hat{\kappa}_\rho\RP,
\label{b4} 
\end{align}
where an expression in braces is the determinant of the induced metric on the space $\left\{\hat{\tau},\hat{\varphi}\right\}$ and consequently is always negative in view of the non-degeneracy and the signature $(-,+)$. Thus the statements  a),b) directly follow from the definition of the partition of $M$. To prove c), substitute into (\ref{b4a}) an expression for $\LP \hat{\tau},\hat{\varphi} \RP$ form the condition $\LP \hat{\kappa}_\rho,\hat{\kappa}_\rho\RP=0$. After simple rearrangements we obtain:
\begin{align}
e_\rho=\left(\frac{\LP\hat{\varphi},\hat{\varphi}\RP-\LP\hat{\tau},\hat{\tau}\RP \rho^2}{2\rho}\right)\hat{\kappa}_\rho.
\label{b5} 
\end{align}
Moreover, the numerator is not equal to zero for a nondegenerate restriction of the metric on $\left\{ \hat{\tau},\hat{\varphi}\right\}$. If $\rho=0$, then $\LP\hat{\varphi},\hat{\varphi}\RP=0$ and consequently $e_\rho=\hat{\varphi}=\hat{\kappa}_\rho$.
\end{proof}

\begin{corollary}
If the restriction of the scalar product $\LP \; ,\; \RP$ on the subspace $\left\{ \hat{\tau},\hat{\varphi}\right\}$ is non-degenerate, the orthogonal complement $\hat{\kappa}^{\bot}_\rho$ will be a subbundle in the tangent bundle $TM$ of dimension ${\rm dim}(\hat{\kappa}^{\bot}_\rho)=2m-1$.    
\label{C1}
\end{corollary}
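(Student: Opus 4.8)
The plan is to reduce the claim to a pointwise linear-algebra fact and then promote it to a bundle statement by realizing $\hat{\kappa}^{\bot}_\rho$ as the kernel of a nowhere-vanishing $1$-form. First I would record that $\hat{\kappa}_\rho$ is nowhere zero: non-degeneracy of the pull-back of $\LP\;,\;\RP$ to $\left\{\hat{\tau},\hat{\varphi}\right\}$ means the $2\times 2$ Gram matrix of $\hat{\tau},\hat{\varphi}$ is invertible at every point, which forces $\hat{\tau}$ and $\hat{\varphi}$ to be pointwise linearly independent; in particular $\hat{\varphi}|_p\neq 0$, whence $\hat{\kappa}_\rho|_p=\rho\hat{\tau}|_p+\hat{\varphi}|_p\neq 0$ for all $p$. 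Because $\LP\;,\;\RP$ is non-degenerate on each $T_pM$, the associated $1$-form $\alpha_\rho:=\LP\hat{\kappa}_\rho,\cdot\RP$ is then a smooth, nowhere-vanishing section of $T^*M$, and by Definition \ref{D1} one has $\hat{\kappa}^{\bot}_\rho=\ker\alpha_\rho$ as a pointwise-defined subset of $TM$.

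Next comes the fiberwise count. For a non-degenerate bilinear form on an $m$-dimensional space and any nonzero vector $w$, the annihilator $\left\{v:\LP v,w\RP=0\right\}$ has dimension $m-1$; this holds regardless of whether $w$ is null, since the codimension of the orthogonal complement of a rank-one subspace under a non-degenerate form equals one irrespective of the signature of that subspace. Applying this at each $p$ with $w=\hat{\kappa}_\rho|_p$ gives $\dim(\hat{\kappa}^{\bot}_\rho|_p)=m-1$ everywhere, in particular with no rank jump across $U^{0}_\rho$. Equivalently, viewing $\alpha_\rho$ as a bundle morphism $TM\to M\times\mathbb{R}$, it has constant rank one, so its kernel $\hat{\kappa}^{\bot}_\rho$ is a vector subbundle of $TM$ of fiber rank $m-1$; a concrete smooth local frame is supplied by the globally smooth $e_\rho$ of (\ref{b4a}) — nowhere zero by (\ref{b4}) on $U^{\pm}_\rho$ and by (\ref{b5}) on $U^{0}_\rho$ — together with a local Gram--Schmidt orthonormal frame $\left\{e_a\right\}$ of $\left\{\hat{\tau},\hat{\varphi}\right\}^{\bot}$, which are independent from $e_\rho$ since $\left\{\hat{\tau},\hat{\varphi}\right\}\cap\left\{\hat{\tau},\hat{\varphi}\right\}^{\bot}=\{0\}$. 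Finally, the total space of a rank-$(m-1)$ subbundle over the $m$-dimensional base $M$ has dimension $m+(m-1)=2m-1$, as claimed.

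The only genuine subtlety — the step I would treat most carefully — is guaranteeing that the fiber dimension does not change on $U^{0}_\rho$: there $\hat{\kappa}_\rho$ is a null vector lying in $\hat{\kappa}^{\bot}_\rho$, so the restriction of $\LP\;,\;\RP$ to $\mathrm{span}(\hat{\kappa}_\rho)$ degenerates, and one must check that this degeneration of the \emph{restricted} form neither enlarges nor shrinks the orthogonal complement. It does not, precisely because the \emph{ambient} form on $T_pM$ remains non-degenerate, which is exactly the hypothesis. Everything else — the nowhere-vanishing of $\hat{\kappa}_\rho$ and $e_\rho$, and the constant-rank/subbundle conclusion — is then routine.
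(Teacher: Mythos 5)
Your proof is correct. The paper offers no separate argument for this corollary: it is read off directly from Proposition \ref{P1}, whose explicit frame $\left\{e_\rho,e_a\right\}$ exhibits $m-1$ smooth, pointwise independent sections spanning each fiber of $\hat{\kappa}^{\bot}_\rho$, whence fiber rank $m-1$ and total-space dimension $m+(m-1)=2m-1$. Your primary route --- realizing $\hat{\kappa}^{\bot}_\rho$ as the kernel of the nowhere-vanishing one-form $\LP\hat{\kappa}_\rho,\cdot\RP$ and invoking the constant-rank argument --- is a genuinely different and somewhat cleaner mechanism: it needs only the non-degeneracy hypothesis actually stated in the corollary (not the signature $(-,+)$ assumed in Proposition \ref{P1}), and it makes transparent why the fiber dimension does not jump on $U^{0}_\rho$, where $\hat{\kappa}_\rho$ is null and lies inside its own complement --- the point the paper only touches on in Remark \ref{R1}. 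What the paper's frame-based reading buys instead is the explicit adapted basis that the rest of the paper uses in the structure equations. Your preliminary observation that non-degeneracy of the Gram matrix forces $\hat{\tau}$ and $\hat{\varphi}$ to be pointwise independent, hence $\hat{\kappa}_\rho\neq 0$, is exactly the input needed and is consistent with Remark \ref{R2}, which notes that the subbundle property fails precisely at singular points of the Killing fields where that non-degeneracy breaks down.
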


\begin{remark}
In the regions $U^{\pm}_\rho$ the vector field $e_\rho$ can always be normalized, and the set $\left\{e_{\rho},e_a\right\}$ become an orthonormal basis. However, it is impossible to introduce an orthonormal basis on the full  $M$. Indeed, if this were possible, then on the restriction $\hat{\kappa}^{\bot}_\rho$ on $U^{0}_\rho$ there should exist some orthonormal basis. But in this case the restriction $\hat{\kappa}^{\bot}_\rho$ on $U^{0}_\rho$ in non-degenerate, and consequently its orthogonal complement $(\hat{\kappa}^{\bot}_\rho)^{\bot}=\hat{\kappa}_\rho$ is non-denegerate either, which leads to contradiction.  
\label{R1}
\end{remark}

\begin{remark}
If the Killing vectors $\hat{\tau}$ or $\hat{\varphi}$ have a singular point $p\in M$, then the orthogonal complement $\hat{\kappa}^{\bot}_\rho$ is no more a manifold, since the restriction of the metric on $\left\{ \hat{\tau},\hat{\varphi}\right\}$ degenerates, and the dimension of $\hat{\kappa}^{\bot}_\rho$ can change from point to point. For example, if in the singular point $p$ the field $\hat{\varphi}|_p=0$ and $\rho=0$, then $\hat{\kappa}^{\bot}_0|_p=T_pM$ and ${\rm dim}(\hat{\kappa}^{\bot}_\rho)|_p=2m$. Also, in this case $p\in U^{0}_0$.   
\label{R2}
\end{remark}

Let us now explain the physical meaning of the constructions introduced above. Let $\gamma$ be some geodesic on $M$, and  $\dot{\gamma}$ denotes the tangent vector field to $\gamma$. Consider the scalar product 
\begin{align}
\LP \hat{\kappa}_\rho, \dot{\gamma}\RP=\rho \LP\hat{\tau},\dot{\gamma}\RP+\LP\hat{\varphi},\dot{\gamma}\RP.
\label{b6} 
\end{align}
Then the quantity 
\begin{align}
\LP \hat{\kappa}_\rho, \dot{\gamma}\RP=-\rho E+L,
\label{b7}
\end{align}
where $E$ and $L$ the energy and the azimuthal momentum,  will be conserved along the geodesic. Then, if $\rho=L/E$ is an impact parameter for the chosen geodesic, in each point $p\in\gamma$ the orthogonality condition holds $\LP \kappa_\rho|_p, \dot{\gamma}|_p\RP=0$, and  consequently $\dot{\gamma}\in \hat{\kappa}^{\bot}_\rho$.

Conversely, if at a given point $p\in M$ the tangent vector $v_p\in \hat{\kappa}^{\bot}_\rho|_p$, i.e., $\LP \kappa_\rho|_p, v_p\RP=0$, then $v_p$ is a tangent vector to some geodesic $\gamma$ with an impact parameter $\rho$ or $E=L=0$(the trivial case), which always exists and unique at least in some vicinity of $p\in M$ as solution of  ODE with initial conditions $\gamma(0)=p$ and $\dot{\gamma}(0)=v_p$. 

In a Lorentzian manifold, in view of the Proposition \ref{P1}, the null geodesics $\gamma$ can exist only in the domains $U^{0,+}_\rho$, since otherwise the restriction of the scalar product on $\hat{\kappa}^{\bot}_\rho$ would have Euclidean signature. Also notice that the trivial case ($E=0$) is impossible on $U^{0,+}_\rho$ for null and timelike vectors in $\hat{\kappa}^{\bot}_\rho$, is the restriction $\LP \; ,\; \RP$ onto the subspace $\left\{ \hat{\tau},\hat{\varphi}\right\}$ in non-degenerate. Indeed, for all null (timelike) $e_\rho\pm e_a\in \hat{\kappa}^{\bot}_\rho$($e_\rho\in \hat{\kappa}^{\bot}_\rho$) we obtain $\LP\hat{\tau},e_\rho\pm e_a\RP=\LP\hat{\tau},e_\rho\RP$ and 
\begin{align}
-E=\LP\hat{\tau},e_\rho\RP=\left\{\LP\hat{\tau},\hat{\tau}\RP\LP\hat{\varphi},\hat{\varphi}\RP-\LP\hat{\tau},\hat{\varphi}\RP^2\right\}<0.
\end{align}
This completes the proof.

\begin{proposition}
For any geodesic $\gamma_\rho$ with an impact parameter $\rho$ the tangent vector field $\dot{\gamma}_\rho\in \hat{\kappa}^{\bot}_\rho$. If the restriction of the scalar product $\LP \; ,\; \RP$ on the subspace $\left\{ \hat{\tau},\hat{\varphi}\right\}$ is non-degenerate on $U^{0,+}_\rho$, for each null/timelike  $v_p\in\hat{\kappa}^{\bot}_\rho|_p$ the unique null/timelike geodesic $\gamma$ exists with an impact parameter $\rho$ such that $\gamma(0)=v_p$ and $\dot{\gamma}(0)=v_p$.
\label{P2}
\end{proposition}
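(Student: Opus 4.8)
The plan is to prove the two assertions separately, both resting on the single classical fact that the contraction of a Killing vector with a geodesic tangent is a constant of motion. For the first assertion, I would note that since $\hat{\kappa}_\rho$ is a fixed linear combination of the commuting Killing fields $\hat{\tau},\hat{\varphi}$ with constant coefficient $\rho$, it is itself a Killing vector, so $\LP \hat{\kappa}_\rho,\dot{\gamma}\RP$ is conserved along any geodesic $\gamma$. Evaluating it via (\ref{b6})--(\ref{b7}) gives $\LP \hat{\kappa}_\rho,\dot{\gamma}_\rho\RP=-\rho E+L$, which vanishes identically precisely when $\rho=L/E$, i.e. when $\rho$ is the impact parameter of $\gamma_\rho$. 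Hence $\LP \hat{\kappa}_\rho|_p,\dot{\gamma}_\rho|_p\RP=0$ at every $p\in\gamma_\rho$, which is exactly the statement $\dot{\gamma}_\rho\in\hat{\kappa}^{\bot}_\rho$ by Definition \ref{D1}.

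For the converse, given a null or timelike $v_p\in\hat{\kappa}^{\bot}_\rho|_p$, I would invoke the Picard--Lindel\"of theorem for the (second order) geodesic ODE: there is a unique geodesic $\gamma$ defined on a neighbourhood of $0$ with $\gamma(0)=p$ and $\dot{\gamma}(0)=v_p$, and it is null (resp. timelike) because $\LP\dot\gamma,\dot\gamma\RP$ is constant along an affinely parametrized geodesic. Along this $\gamma$ the conserved quantities $E=-\LP\hat{\tau},\dot\gamma\RP$ and $L=\LP\hat{\varphi},\dot\gamma\RP$ are defined, and by the argument of the first part $-\rho E+L=\LP\hat{\kappa}_\rho,\dot\gamma\RP$ is constant and equals $\LP\hat{\kappa}_\rho|_p,v_p\RP=0$; therefore $L=\rho E$ all along $\gamma$. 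It remains only to exclude the degenerate possibility $E=0$ (equivalently $L=0$), for otherwise $\rho=L/E$ would not be meaningful, and this is the one place where the non-degeneracy hypothesis on $\left\{\hat{\tau},\hat{\varphi}\right\}$ together with Proposition \ref{P1} are really used --- I expect it to be the main obstacle.

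To carry out that exclusion, on $U^{+}_\rho$ I would use the orthogonal basis $\left\{e_\rho,e_a\right\}$ of $\hat{\kappa}^{\bot}_\rho$ furnished by Proposition \ref{P1}, with $e_\rho$ timelike and the $e_a$ spacelike: any causal vector $v_p$ of this space has a non-zero $e_\rho$-component, and since each $e_a\in\left\{\hat{\tau},\hat{\varphi}\right\}^{\bot}$ one gets $-E=\LP\hat{\tau},v_p\RP$ proportional to $\LP\hat{\tau},e_\rho\RP$. A direct substitution of (\ref{b4a}) yields $\LP\hat{\tau},e_\rho\RP=\LP\hat{\tau},\hat{\tau}\RP\LP\hat{\varphi},\hat{\varphi}\RP-\LP\hat{\tau},\hat{\varphi}\RP^2$, the determinant of the induced metric on $\left\{\hat{\tau},\hat{\varphi}\right\}$, hence strictly negative by non-degeneracy and signature $(-,+)$; thus $E\neq 0$. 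On $U^{0}_\rho$ the only causal vectors in $\hat{\kappa}^{\bot}_\rho$ are the null multiples of $e_\rho\propto\hat{\kappa}_\rho$ (the positive-semidefinite subspace carries no timelike vector, and $e_\rho\pm e_a$ is spacelike there), and the same computation of $\LP\hat{\tau},e_\rho\RP$ again gives $E\neq 0$. Having ruled out $E=0$ on all of $U^{0,+}_\rho$, we conclude that $\rho=L/E$ is indeed the impact parameter of the unique geodesic $\gamma$, which completes the proof. I would also remark that $\gamma$ is only guaranteed to exist locally near $p$ and may exit $U^{0,+}_\rho$, but wherever it is defined its impact parameter stays equal to $\rho$; and that the condition ``$\gamma(0)=v_p$'' in the statement should read ``$\gamma(0)=p$''.
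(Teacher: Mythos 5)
Your proof is correct and follows essentially the same route as the paper's: conservation of $\LP\hat{\kappa}_\rho,\dot{\gamma}\RP=-\rho E+L$ for the first assertion, local ODE existence/uniqueness for the converse, and exclusion of the trivial case $E=0$ via the computation $\LP\hat{\tau},e_\rho\RP=\LP\hat{\tau},\hat{\tau}\RP\LP\hat{\varphi},\hat{\varphi}\RP-\LP\hat{\tau},\hat{\varphi}\RP^2<0$, exactly as in the text preceding the proposition. Your treatment of general causal vectors (nonzero $e_\rho$-component on $U^{+}_\rho$, only null multiples of $\hat{\kappa}_\rho$ on $U^{0}_\rho$) is a slightly more careful rendering of the paper's check on $e_\rho\pm e_a$ and $e_\rho$, and you are right that ``$\gamma(0)=v_p$'' is a typo for ``$\gamma(0)=p$''.
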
  

Basically, we are interested by closed connected regions admitting geodesics with fixed $\rho$.

\begin{definition}
A causal $\rho$-region $P_\rho$ will be called a closed connected submanifold in $M$ such that $\partial P_\rho\subset U^{0}_\rho$, and $P_\rho/\partial P_\rho\subset U^{+}_\rho$. The region $O_\rho=P_\rho/\partial P_\rho$ will be called $\rho$-accessible. 
By hat $\hat{{}}$ we denote the restriction $\hat{\kappa}^{\bot}_\rho$ on $P_\rho$ and $O_\rho$.
\end{definition}
 
If there are no singular points the causal $\rho$-bundle $\hat{P}_\rho$, its boundary $\partial \hat{P}_\rho$ and the inner region $\hat{O}_\rho$ are subbundles in the restriction $TM$ on the corresponding submanifolds by virtue of Corollary \ref{C1}. From the point of view of geodesics, and, in particular, the FPOs, the region $P_\rho$ represents an accessible region for the  null geodesics in some effective potential  \cite{Cunha:2017eoe,Lukes}. Physical meaning of the causal region $P_\rho$ is that any point can be theoretically observable for any observer in the same region (for geodesics with fixed $\rho$). This causal region may contain spatial infinity (if any) and then will be observable for an asymptotic observer. In some cases, several causal areas may exist, while null geodesics with a given $\rho$ cannot connect one to another. The boundary $\partial P_ \rho $ of the causal region is defined as the branch of the solution of the equation $ \LP \hat{\kappa}_\rho, \hat{\kappa}_\rho\RP = 0 $ and is the set of turning points of null geodesics.

The accessible region $O_\rho$ is a region in $M$ in which there is a stationary observer with a fixed value of the impact parameter $\rho$. The speed $u_\rho$ of such an observer is equal to the normalized value of the vector $e_\rho$ and is written in the canonical form
\begin{align}
u_\rho=N(\hat{\tau}+\Omega_\rho\hat{\varphi}), \quad \LP u_\rho  ,u_\rho \RP=-1, \\ \Omega_\rho=-B_\rho/A_\rho, 
\end{align} 
where $\Omega_\rho$ - and an angular velocity of an observer which depends non-trivially on the point in space for fixed $\rho$, and $N$ is a normalizing function. In particular for $\rho=0$ we obtain ZAMO observer with $\Omega_0=-\LP\hat{\tau},\hat{\varphi}\RP/\LP\hat{\varphi},\hat{\varphi}\RP$.  

\subsection{Fundamenthal photon submanifold}
\label{SS2}
Let $(M,\hat{g})$ and $(S,g)$ be Lorentzian manifolds, of dimension $m$ and $n$ respectively, and $f:S\rightarrow M$ an isometric embedding  \cite{Chen} defining $(S,g)$ as a submanifold (a hypersurface if $n=m-1$ in $(M,\hat{g})$). Let $TS$ be a tangent bundle over $S$, and $V$ - its subbundle. Let $\hat{\nabla}$ and $\nabla$ - be the Levi-Civita connections on $M$ and $S$ respectively. We adopt here the following convention for the second quadratic form  $\sigma$ of the isometric embedding $f$ \cite{Chen,Okumura,Senovilla:2011np}:      
\begin{align}
\hat{\nabla}_u v=\nabla_u v +\sigma(u,v), \quad \forall u,v\in TS,
\label{b56} 
\end{align} 
where $\nabla_u v\in TS$ and $\sigma(u,v)\in TS^{\bot}$, where $TS^{\bot}$ - is a standard orthogonal complement (see, e.g.,  \cite{Chen}).

\begin{definition}
We will call an isometric embedding $f:S\rightarrow M$ invariant, if the Killing vector fields  $\hat{\tau}$ and $\hat{\varphi}$ in $M$ are tangent vector fields to $S$. 
\label{D3}
\end{definition}

For invariant embeddings the Killing vectors of $M$ will be also the Killing vectors on the submanifold  $S$, what can be easily verified projecting the Lei derivative onto $S$. In this case there is a natural correspondence between the pullback of  $\hat{O}_\rho|_{S}$ on $S$ and an intrinsic $\hat{O}^S_\rho$ in the $S$ itself (as well as for  $\hat{P}_\rho|_{S}$)
\begin{align}
\hat{O}_\rho|_{S}=\hat{O}_\rho^S \oplus TS^{\bot},
\end{align} 
since the vector $\hat{\kappa}_\rho$ is tangent to $S$, and the orthogonal vector fields are projected into orthogonal. 

By virtue of the Poincare-Hopf theorem, not any manifold $S$ admits the existence of a smooth tangent vector field $\hat{\varphi}$ without singular points (in particular, vector fields on a sphere $\mathbb S^2$ have at least one singular point, since $\mathbb S^2$ has the Euler number $\chi=2$). We will assume the singular points  $p\in S:\hat{\varphi}|_p=0$ (See the Remark \ref{R2}). In the case of the submanifolds $S$ corresponding to  $\rho=0$, the orthogonal complement  $\hat{\kappa}^{\bot}_0|_p=T_pS$, while the singular points will lie on the boundary    $p\in \partial P^S_0$. Indeed, for an arbitrary vector $v_p\in T_{p}S$ \begin{align}
\LP v_p ,\hat{\kappa}_\rho|_p\RP=\rho \LP v_p , \hat{\tau}|_p\RP=-\rho E=0,
\end{align}
and
\begin{align}
&\LP \hat{\kappa}_0,\hat{\kappa}_0 \RP|_p=\LP\hat{\varphi},\hat{\varphi}\RP|_p=0.
\end{align}  
If $\rho\neq0$, the null tangent vectors with a given $\rho$ must correspond to zero value of the energy $E$. Such singular points will not be considered as the geodesics we are interested in don’t pass through them anyway. In all non-singular points we will always     require the non-degeneracy of the restriction of $\LP \; ,\; \RP$ on the subspace $\left\{ \hat{\tau},\hat{\varphi}\right\}$.
Therefore, in particular, $\hat{O}^S_\rho$ is a subbundle in $TS$, under our assumption that $O^S_\rho$ does not contain singular points, and $S$ for $\rho=0$ can have singular points only on the boundary $\partial P^S_0$.
 
We now define a weakened version of the standard umbilical condition (\ref{01}) \cite{Chen,Okumura,Senovilla:2011np} requiring it to be satisfied only for some subbundle $V$ in the tangent bundle $TS$.

\begin{definition}
A point $p\in S$ will be called a $V$-umbilic point of an isometric embedding $f:S\rightarrow M$ if
\begin{align}
\sigma(u,v)=H \LP u,v\RP, \quad \forall u,v\in V_p,\quad H\in T_pS^{\bot}.
\end{align}  
\end{definition}
A totally $V$-umbilic embedding $f: S\rightarrow M$ is an isometric embedding $V$-umbilic at all points $S$. In particular, every totally umbilical embedding is trivially totally $V$-umbilic for any $V$. We also note that in the general case $H$ appearing in this formula is not the mean curvature of \cite{Chen}. For invariant completely $V$-umbilic embeddings, an important theorem on the behavior of null geodesics holds, generalizing the classical result \cite{Chen,Claudel:2000yi}.  

\begin{theorem}
Every null geodesic $\gamma_\rho$  on an invariant submanifold $S_\rho\subset O_\rho$ (${\rm dim}(S_\rho)>2$) is null geodesic in $M$ if and only if $f_\rho:S_\rho \rightarrow O_\rho$ is an invariant totally $\hat{O}^S_\rho$-umbilic embedding (compare with the analogous statement for totally umbilical surfaces \cite{Chen}).
\label{T1}
\end{theorem}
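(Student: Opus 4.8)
The plan is to reduce the statement to the standard totally-umbilic characterization of photon surfaces (the classical result of \cite{Chen,Claudel:2000yi}), but carried out fiberwise over the subbundle $\hat{O}^S_\rho$ rather than over all of $TS$. The key observation is the one established already in Subsection \ref{SS1}: a vector $v_p\in T_pS$ lies in $\hat{O}^S_\rho|_p$ precisely when $\LP\hat{\kappa}_\rho|_p,v_p\RP=0$, and for null $v_p$ this is equivalent to $v_p$ being tangent to a null geodesic of $M$ with impact parameter $\rho$ (Proposition \ref{P2}). So the geodesics $\gamma_\rho$ under consideration are exactly those whose tangent vectors populate the subbundle $\hat{O}^S_\rho$, and it is only on this subbundle that the umbilic condition needs to be tested.

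First I would take a null geodesic $\gamma$ of $S_\rho$ with $\dot\gamma\in\hat{O}^S_\rho$ and compute, using the Gauss formula (\ref{b56}), $\hat{\nabla}_{\dot\gamma}\dot\gamma=\nabla_{\dot\gamma}\dot\gamma+\sigma(\dot\gamma,\dot\gamma)$. Since $\gamma$ is a geodesic in $S$, the first term vanishes, so $\hat{\nabla}_{\dot\gamma}\dot\gamma=\sigma(\dot\gamma,\dot\gamma)\in T_pS^{\bot}$. For $\gamma$ to be a geodesic in $M$ this normal vector must vanish. Now if $f_\rho$ is totally $\hat{O}^S_\rho$-umbilic, then $\sigma(\dot\gamma,\dot\gamma)=H\LP\dot\gamma,\dot\gamma\RP=0$ because $\dot\gamma$ is null, so the forward implication is immediate once one knows $\dot\gamma$ stays in $\hat{O}^S_\rho$ — which it does, since $\LP\hat{\kappa}_\rho,\dot\gamma\RP$ is the conserved quantity $-\rho E+L$ and this is fixed along $\gamma$ (and one checks $\gamma$ cannot leave $S_\rho$ while remaining null, as $S_\rho\subset O_\rho\subset U^+_\rho$). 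For the converse, I would run the standard polarization argument: assuming every null $\dot\gamma\in\hat{O}^S_\rho$ generates an $M$-geodesic gives $\sigma(v,v)=0$ for all null $v\in\hat{O}^S_\rho|_p$; since $\hat{O}^S_\rho$ has rank $\geq 2$ and carries a Lorentzian fiber metric (Proposition \ref{P1}b restricted to $S$, non-degeneracy assumed), the null cone spans the fiber, and bilinearity plus symmetry of $\sigma$ upgrade this to $\sigma(u,v)=H\LP u,v\RP$ on all of $\hat{O}^S_\rho|_p$ for some $H\in T_pS^{\bot}$ — this is exactly $\hat{O}^S_\rho$-umbilicity. Invariance of the embedding is needed throughout so that $\hat\tau,\hat\varphi$ are tangent to $S_\rho$ and hence $\hat{\kappa}_\rho$ is tangent, which is what makes the splitting $\hat{O}_\rho|_{S_\rho}=\hat{O}^S_\rho\oplus TS^{\bot}$ hold and lets the conserved quantity $\LP\hat{\kappa}_\rho,\dot\gamma\RP$ descend to $S_\rho$.

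The main obstacle I expect is the converse direction's polarization step in the degenerate-looking cases: one must be sure that the null vectors inside the fiber $\hat{O}^S_\rho|_p$ actually span it, which requires the induced metric on $\hat{O}^S_\rho|_p$ to be genuinely Lorentzian of rank $\geq 2$. This is where the hypotheses $\dim(S_\rho)>2$ and $S_\rho\subset O_\rho$ (so the fiber contains the timelike direction $e_\rho$ and at least one spacelike $e_a$) and the standing non-degeneracy assumption on $\mathrm{span}\{\hat\tau,\hat\varphi\}$ all get used; without $\dim S_\rho>2$ the fiber could be two-dimensional null-degenerate and the argument collapses, exactly as in the classical case where photon surfaces are required to have dimension $\geq 3$. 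A secondary point requiring care is verifying that a null $M$-geodesic starting tangent to $S_\rho$ with impact parameter $\rho$ genuinely remains in $\hat{O}^S_\rho$ rather than merely in $\hat{\kappa}^{\bot}_\rho|_{S_\rho}$ — i.e., that it does not immediately leave $S_\rho$; but this follows because tangency to $S_\rho$ plus the umbilic (hence $\sigma(\dot\gamma,\dot\gamma)=0$) condition forces the $M$-geodesic and the $S_\rho$-geodesic with the same initial data to coincide, by uniqueness of solutions of the geodesic ODE.
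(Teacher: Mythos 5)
Your proposal is correct and follows essentially the same route as the paper: the forward direction is the Gauss formula (\ref{b56}) plus the $\hat{O}^S_\rho$-umbilic condition annihilating $\sigma(\dot\gamma,\dot\gamma)$ on null tangents, and the converse is the polarization argument on the Lorentzian fiber $\hat{O}^S_\rho|_p$, which the paper simply makes explicit by evaluating $\sigma$ on the null vectors $e_\rho\pm e_a$ and $e_\rho+(e_a+e_b)/\sqrt{2}$ to obtain (\ref{a31})--(\ref{a32}). Your remarks on where $\dim S_\rho>2$, invariance, and non-degeneracy enter match the paper's use of Proposition \ref{P2}, Remark \ref{R1} and Definition \ref{D3}.
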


\begin{proof}
Suppose that $f_\rho:S_\rho \rightarrow O_\rho$  is a invariant totally $\hat{O}^S_\rho$-umbilic embedding. Let $\gamma_\rho$ be a null geodesic with the impact parameter $\rho$ on the invariant submanifold $S_\rho\subset O_\rho$, i.e, $\nabla_{\dot{\gamma}_\rho} \dot{\gamma}_\rho=0$. Consider an arbitrary point $p\in S_\rho$. For a null tangent vector, $\dot{\gamma}_\rho|_p$ Proposition \ref{P2} means that $\dot{\gamma}_\rho|_p\in\hat{O}^S_\rho|_p$. By our assumption, the isometric embedding $f$ is totally $\hat{O}^S_\rho$-umbilic. Then for the null vector $\dot{\gamma}_\rho|_p$ we get $\sigma(\dot{\gamma}_\rho,\dot{\gamma}_\rho)|_p=0$ and therefore according to formula (\ref{b56}) $\hat{\nabla}_{\dot{\gamma}_\rho} \dot{\gamma}_\rho=0$  i.e. $\gamma_\rho$ is a null geodesic with the impact parameter $\rho$ in $M$.

Conversely, let every null geodesic $\gamma_\rho$ on an invariant submanifold  $S_\rho\subset O_\rho$ be a null geodesic in $M$. By Proposition \ref{P2}, for any null $v_\rho|_p\in\hat{O}^S_\rho|_p$, $v_\rho|_p$ 
is the tangent vector to some null geodesic at the point $p$. Thus, for any null vector $v_\rho|_p\in\hat{O}^S_\rho|_p$ we have  $\sigma(v_\rho,v_\rho)|_p=0$.
By virtue of Proposition \ref{P2}, Remark \ref{R1} and the invariance condition Definition \ref{D3}, we can construct an orthonormal basis 
$\left\{e_\rho,e_a\right\}$ in the space $\hat{O}^S_\rho|_p$.
We now consider the set of null vectors $e_\rho\pm e_a$ in $\hat{O}^S_\rho)|_p$.
 By the previously proved  
      $\sigma(e_\rho\pm e_a,e_\rho\pm e_a)|_p=0$, from which we get 
\begin{align}
\sigma(e_\rho,e_a)=0, \quad  \sigma(e_\rho,e_\rho)+\sigma(e_a,e_a)=0.
\label{a31}
\end{align}  
Consider now a null vector  $e_\rho+(e_a+ e_b)/\sqrt{2}$, for which we obtain
\begin{align}
\sigma(e_a,e_b)=0. 
\label{a32}
\end{align} 
\end{proof}
 
\begin{remark}
The first part of the statement of the theorem can be trivially extended to the entire causal region $P_\rho$. However, in the opposite direction this is no longer true, so in $\partial P_\rho$  there is only one isotropic vector $\hat{\kappa}_\rho$.
\label{R3}
\end{remark}

Physical meaning of the theorem is 
that the null geodesics with a given $\rho$ initially touching the spatial section  of the invariant totally $\hat{O}^S_\rho$-umbilic submanifold remain on it for an arbitrarily long time, unless of course they leave it across the boundary. This is a well-known property of a {\em photon sphere} and its generalization - a {\em photon surface} (PS) \cite{Claudel:2000yi}. Thus, we obtain a generalization of the classical definition of the photon  surfaces to the case of a class of geodesics with a fixed impact parameter.

It is useful to obtain an equation for the second fundamental form of the totally $\hat{O}^S_\rho$-umbilic embedding in the original basis $\left\{\hat{\tau},\hat{\varphi}, e_a\right\}$.  First of all, we will agree on the notation. By definition, put $\tilde{\sigma}_{\tau\tau}\equiv\sigma(\tau,\tau)$, etc. if the second fundamental form is calculated on an unnormalized basis and $\sigma_{ab}\equiv\sigma(e_a,e_b)$ on a normalized one. Substituting the explicit expression for $e_\rho$ into (\ref{a31}) and (\ref{a32}),  we get:
\begin{align}
&A_\rho \tilde{\sigma}_{\tau a}-B_\rho \tilde{\sigma}_{\varphi a}=0, \\
 &A^2_\rho \tilde{\sigma}_{\tau\tau}-2 A_\rho B_\rho \tilde{\sigma}_{\tau\varphi}+ B^2_\rho \tilde{\sigma}_{\varphi\varphi} +N^2_\rho\sigma_{aa}=0,\\
& \sigma_{aa}=\sigma_{bb}, \quad \sigma_{ab}=0,
\end{align} 
where $N_\rho=||e_\rho||$ is a norm.
This structural $\hat{O}^S_\rho$-umbilic equation is defined and works both in the ergoregion and in the area of causality violation. If the Killing vectors have a nonzero norm, it is also convenient to introduce a completely normalized basis
  $\left\{e_\tau,e_\varphi, e_a\right\}$
\begin{align}
e_\tau=\hat{\tau}/\tau, \quad e_\varphi=\hat{\varphi}/\varphi, \quad e_a,
\end{align}   
where $\tau=||\hat{\tau}||$ and  $\varphi=||\hat{\varphi}||$.  This can be done in a fairly general situation when there are no ergoregions or areas of non-causality. In this case we will write $\sigma_{\tau\tau}\equiv\sigma(e_\tau,e_\tau)$ etc. By bilinearity, it is obvious that $\tau^2\sigma_{\tau\tau}=\tilde{\sigma}_{\tau\tau}$ etc. 

The notion of a $\hat{O}^S_\rho$-umbilic embedding is however too general (as is the notion of an umbilical surface by itself \cite{Cao:2019vlu}). Generally speaking, these submanifolds are geodesically not complete (in the sense that null geodesics can leave them across the boundary) or have a non-compact spatial section (geodesics can go into the asymptotic region). Moreover, for each $\rho$ there can be an infinite number of them, just as there are an infinite number of umbilical surfaces, but only one photon sphere in the static Schwarzschild \cite{Cederbaum:2019rbv} solution. Therefore,  it is necessary to introduce a more specific definition of fundamental photon submanifolds.
 
\begin{definition}
A fundamental photon submanifold is an invariant isometric embedding of Lorentzian manifolds $f_\rho:S_\rho \rightarrow P_\rho$   with compact spatial section $I$ (see below for a possible way to define the spatial section for the case of a hypersurface) such that:

a) All non-singular internal points  $q\in (S_\rho/\partial S_\rho)\cap O_\rho$ are $\hat{O}^S_\rho$-umbilic.

b) All boundary points $p\in\partial S_\rho$ (if any) lie in $\partial P_\rho $.

c) For all non-singular points $g\in S_\rho \cap \partial  P_\rho$ (both boundary and internal), the condition holds $\sigma(\hat{\kappa}_\rho,\hat{\kappa}_\rho)|_g=0$.

d) All the singular points $o\in S_0\cap\partial P_0$ are umbilical.
\label{D5}
\end{definition}
In the case ${\rm dim}(S_\rho)=m-1$, the fundamental photon submanifold is a timelike fundamental photon  hypersurface  (FPH). In the case ${\rm dim}(S_\rho)=2$, it is the fundamental photon orbit (axially symmetric and lying in $\partial  P_\rho$ - for example, circular equatorial).
 
\begin{proposition}
If $O_\rho$ is connected, then every null geodesic $\gamma_\rho$ at least once touching an arbitrary FP-submanifold $S_\rho$ lies in it completely:
  $\gamma_\rho\subset S_\rho$.    
\end{proposition}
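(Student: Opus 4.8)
The plan is to combine Theorem \ref{T1} with the boundary conditions (b), (c) of Definition \ref{D5} and a continuation argument. First I would take an arbitrary null geodesic $\gamma_\rho$ in $M$ with impact parameter $\rho$ that touches $S_\rho$ at some point $p$, meaning $\gamma_\rho(0)=p$ and $\dot\gamma_\rho(0)\in T_pS_\rho$; by Proposition \ref{P2} the tangent vector $\dot\gamma_\rho(0)$ then lies in $\hat O^S_\rho|_p$ (or the trivial case, which is excluded on $O_\rho$). I would split into two cases according to where $p$ lies. If $p$ is a non-singular internal point in $O_\rho$, then by part (a) the embedding is $\hat O^S_\rho$-umbilic at $p$, so Theorem \ref{T1} (its "only if" direction runs locally) gives a maximal connected piece of $\gamma_\rho$ staying in $S_\rho$; the issue is to show this piece is in fact all of $\gamma_\rho$.

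The continuation step is where the boundary and closedness hypotheses do the work. Let $J\subset\mathbb R$ be the maximal interval on which $\gamma_\rho$ stays inside $S_\rho$. Since $S_\rho$ is a closed (as a subset) submanifold — its compact spatial section together with invariance under the flow of $\hat\tau$ forces the worldline tube to be closed in $M$ — if $J$ had a finite endpoint $t_*$ then $\gamma_\rho(t_*)\in S_\rho$ by closedness, and $\dot\gamma_\rho(t_*)$ would still be tangent to $S_\rho$ by continuity of the tangency condition $\LP\hat\kappa_\rho,\dot\gamma_\rho\RP=0$ together with $\dot\gamma_\rho(t_*)\in T_{\gamma_\rho(t_*)}M$; one must check $\dot\gamma_\rho(t_*)\in T_{\gamma_\rho(t_*)}S_\rho$, which follows because the set of $t$ with $\gamma_\rho(t)\in S_\rho$ is closed and $\gamma_\rho$ was interior on one side. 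At $\gamma_\rho(t_*)$ there are two possibilities: either it is an interior point of $S_\rho$, in which case the local version of Theorem \ref{T1} extends the geodesic past $t_*$ inside $S_\rho$, contradicting maximality; or it lies on $\partial S_\rho$. By part (b), $\partial S_\rho\subset\partial P_\rho\subset U^0_\rho$, so at such a point the only null direction in $\hat\kappa^\bot_\rho$ is $\hat\kappa_\rho$ itself (Remark \ref{R3}), hence $\dot\gamma_\rho(t_*)\propto\hat\kappa_\rho$; but $\hat\kappa_\rho$ is tangent to $S_\rho$, and by part (c), $\sigma(\hat\kappa_\rho,\hat\kappa_\rho)|_{t_*}=0$, so again $\hat\nabla_{\dot\gamma_\rho}\dot\gamma_\rho=0$ is consistent with staying in $S_\rho$ and the geodesic cannot actually exit — more precisely, a geodesic tangent to $\hat\kappa_\rho$ at a point where $\sigma(\hat\kappa_\rho,\hat\kappa_\rho)=0$ is just (a reparametrization of) the integral curve of $\hat\kappa_\rho$, which lies in $S_\rho$. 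Either way $J$ has no finite endpoint, so $J=\mathbb R$ and $\gamma_\rho\subset S_\rho$.

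A few subsidiary points need care. First, the case where the touching point $p$ lies on $\partial S_\rho$ or is one of the admitted singular points $o\in S_0\cap\partial P_0$: for the singular points part (d) supplies total umbilicity, so the argument of Theorem \ref{T1} applies verbatim there, and for boundary-but-nonsingular points one uses (c) exactly as above. Second, one should note that $S_\rho\cap O_\rho$ being where (a) applies already uses that $O^S_\rho$ carries an orthonormal frame $\{e_\rho,e_a\}$ (Proposition \ref{P1}, Remark \ref{R1}), which is what legitimizes invoking the null-combination trick inside Theorem \ref{T1}. Third, connectedness of $O_\rho$ is used to guarantee that the interior piece of $\gamma_\rho$ cannot "jump" between disconnected accessible components — without it one could only conclude $\gamma_\rho$ stays in the connected component of $S_\rho$ through $p$.

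\textbf{Main obstacle.} The genuine difficulty is the behavior at the boundary $\partial P_\rho$: there the normal frame degenerates ($e_\rho\parallel\hat\kappa_\rho$ becomes null, Proposition \ref{P1}c), so the clean $\hat O^S_\rho$-umbilic argument of Theorem \ref{T1} is unavailable, and one must argue directly that a null geodesic reaching $\partial S_\rho$ or $\partial P_\rho$ tangentially is forced onto the integral curve of $\hat\kappa_\rho$ and hence does not escape $S_\rho$ — this is precisely the content of conditions (b), (c), (d), and verifying that they are exactly what is needed (and that no further turning behavior can occur) is the crux of the proof.
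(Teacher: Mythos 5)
Your overall architecture matches the paper's: interior points are handled by Theorem \ref{T1} together with condition (a), the boundary is handled by (b) and (c) via the observation that $\partial P_\rho$ is a set of turning points where the only available null direction in $\hat{\kappa}^{\bot}_\rho$ is $\hat{\kappa}_\rho$ itself (Remark \ref{R3}), singular points are handled by (d) exactly as you say, and connectedness of $O_\rho$ is used to forbid passage to another component. Your reorganization as a maximal-interval continuation argument is a reasonable (arguably more careful) way to present the same proof.

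However, there is one genuinely wrong step at the crux. You claim that a null geodesic tangent to $\hat{\kappa}_\rho$ at a point of $\partial P_\rho$ where $\sigma(\hat{\kappa}_\rho,\hat{\kappa}_\rho)=0$ "is just a reparametrization of the integral curve of $\hat{\kappa}_\rho$, which lies in $S_\rho$." This is false in general: for a Killing field one has $\hat{\nabla}_{\hat{\kappa}_\rho}\hat{\kappa}_\rho=-\tfrac{1}{2}\hat{\nabla}\LP\hat{\kappa}_\rho,\hat{\kappa}_\rho\RP$, and on $\partial P_\rho$ the norm vanishes but its gradient is transverse to the level set and generically nonzero, so the integral curves of $\hat{\kappa}_\rho$ are not geodesics there (unless $\partial P_\rho$ happened to be a Killing horizon). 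If your claim were true, a geodesic touching $\partial P_\rho$ would remain on $\partial P_\rho$ forever, contradicting e.g.\ the Kerr spherical orbits, which touch the boundary of the causal region only momentarily at the $\theta$-turning points and then oscillate back. The correct mechanism — and the one the paper intends — is a reflection argument: at the turning point $\dot{\gamma}_\rho\propto\hat{\kappa}_\rho$ is tangent to $S_\rho$ by invariance; condition (c) gives $\sigma(\dot{\gamma}_\rho,\dot{\gamma}_\rho)=0$ there, and condition (a) gives it at the subsequent interior points, so the \emph{intrinsic} null geodesic of $S_\rho$ with this initial data is also a geodesic of $M$; by uniqueness of geodesics it coincides with $\gamma_\rho$, which therefore re-enters $O_\rho\cap S_\rho$ rather than escaping. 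A secondary omission: your dichotomy "interior point of $S_\rho$ (use Theorem \ref{T1}) or point of $\partial S_\rho$" misses interior points of $S_\rho$ lying on $\partial P_\rho$, where condition (a) does not apply (it is restricted to $O_\rho$) and the same condition-(c) turning-point argument is needed — this is precisely why Definition \ref{D5}(c) is stated for "both boundary and internal" points of $S_\rho\cap\partial P_\rho$.
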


\begin{proof}
Condition a), by virtue of Theorem \ref{T1}, prevents null geodesics from leaving the FP-submanifold at all interior points $q\in (S_\rho/\partial S_\rho)\cap O_\rho$. Condition b) for boundary non-singular points $p\in \partial S_\rho$  prevents the possibility of null geodesics to leave fundamental photon submanifolds through the boundary (if any). Indeed, $\partial P_\rho$ is the set of turning points for null geodesics that can only touch $\partial P_\rho$, and then either go inside the region $O_\rho$ (if $O_\rho$ is connected, then a null geodesic will not go into another connected component) or just stay in $\partial P_\rho$. Condition c) $\sigma(\hat{\kappa}_\rho,\hat{\kappa}_\rho)=0$ ensures the return of null geodesics to a totally $\hat{O}^S_\rho$-umbilic submanifold after reflecting at the turning point (it is enough since there is only one null vector $\hat{\kappa}_\rho$ in $\partial  P_\rho$). In case d) if $o\in S_0$  is a singular point, then $\hat{\kappa}^{\bot}_0|_o=T_oS$. Then, to generalize the proof of Theorem \ref{T1} to this case, we can consider instead of $\left\{e_\rho,e_a\right\}$ an arbitrary non-degenerate orthnormal basis in the complete tangent space $T_{o}S$ which always exists for an isometric embedding \cite{Chen}. Moreover, the point $o$ itself will prove to be umbilical. Since all the singular points of surfaces with $\rho\neq0 $ are not attainable by, the statement is completely proved.
\end{proof}

\begin{remark}
If $O_\rho$ is disconnected, then a null geodesic can, in principle, leave the FP-submanifolds through the boundary lying in $\partial P_\rho $ by passing from one connected component of $O_\rho $ to another.
\end{remark}

From this statement, it is clear that the so-defined fundamental photon submanifolds in the most general case can contain two types of null geodesics:

a) Non-periodic photon  orbits (trapped in the FP-submanifold).

b) Periodic fundamental photon  orbits \cite{Cunha:2017eoe}.

Thus, FP-submanifolds generalize the concept of the latter and give them a new geometric interpretation, providing us with new tools of the theory of submanifolds, which has demonstrated its strength in constructing uniqueness theorems \cite{Cederbaum,Yazadjiev:2015hda,Yazadjiev:2015mta,Yazadjiev:2015jza,Rogatko,Cederbaumo} and analysis of topological properties.
  
\subsection{Fundamenthal Photon Hypersurfaces}
\label{SS3}
We now turn to the study of fundamental photon hypersurfaces ${\rm dim}(S)=m-1$, their spatial section $I$, and the dynamics of null geodesics on them. We first consider a 3-dimensional fundamental photon hypersurface (or even a 3-dimensional submanifold). In the 3-dimensional case, there are a number of strict restrictions on the behavior of null geodesics on the FPH, since at each point there are only two linearly independent null tangent vectors with fixed $\rho$ (and, accordingly, at most two null geodesics $\dot{\gamma}_\rho$ can pass through each point).

Let $\gamma(s)$ be some null geodesic on a 3-dimensional FPH passing through the point $p\in O^S_\rho$ when $s=0$. We introduce locally in a neighborhood of the point $p$ an adapted coordinate system $(\tau,\theta,\varphi)$ such that
\begin{align}
\hat{\tau}=\partial_\tau, \quad \hat{\varphi}=\partial_\varphi, \quad e_a=\partial_\theta.
\end{align} 
We define the projection of the geodesic onto the subspace $(\theta,\varphi)$ as a two-dimensional curve
  $(\gamma_\theta(s),\gamma_\varphi(s))\subset(\theta,\varphi)$, where $-\epsilon<s<\epsilon$. 

\begin{proposition}
At each point $p\in O^S_\rho$ of a 3-dimensional FPH (or even a 3-dimensional FP-submanifold of greater codimension), null geodesics with a given $\rho$ can have at most one intersection/touch or at most one self-intersection/self-touch of projections on the subspace $(\theta,\varphi)$.
\end{proposition}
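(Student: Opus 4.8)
The plan is to exploit the fact that on a 3-dimensional FPH the orthogonal complement $\hat{\kappa}^{\bot}_\rho|_p$ is exactly 2-dimensional (by Corollary \ref{C1} restricted to the hypersurface, $\dim \hat{O}^S_\rho|_p = 2$), so the light cone in this 2-plane consists of precisely two null directions, $e_\rho + e_a$ and $e_\rho - e_a$. Every null geodesic with impact parameter $\rho$ through $p$ therefore has tangent proportional to one of these two vectors, and once we push this forward through the adapted chart $(\tau,\theta,\varphi)$ the projection $(\gamma_\theta(s),\gamma_\varphi(s))$ onto the $(\theta,\varphi)$-plane has a velocity whose direction at $p$ is one of two fixed values. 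I would first make this precise: write $\dot\gamma_\rho|_p = c(e_\rho \pm e_a)$, decompose $e_\rho = A_\rho \hat\tau - B_\rho\hat\varphi$ in the coordinate basis, and note that the $\partial_\theta$- and $\partial_\varphi$-components of $\dot\gamma_\rho$ give $\dot\gamma_\theta/\dot\gamma_\varphi = \pm 1/(-B_\rho)$ at $p$ — the crucial point being that this ratio depends only on the position $p$ and the sign, not on which geodesic we chose.

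Next I would turn the geometric observation into a statement about the projected curves. Suppose two such projected curves (or two branches of one self-intersecting projected curve) both pass through the image of a point $q$ and there meet or are tangent there. Lift $q$ to $S_\rho$; at the lift, both geodesics are null with impact parameter $\rho$, hence both tangents lie in the 2-plane $\hat{O}^S_\rho|_q$, hence each is proportional to $e_\rho+e_a$ or $e_\rho - e_a$ at $q$. If the two projected velocity directions at $q$ coincide (which is what "touch" or "self-touch" means), then both geodesics have the same null direction at $q$ — but a null geodesic is uniquely determined by a point and a null direction (Proposition \ref{P2}), so the two geodesics coincide as curves in a neighborhood, and their projections coincide near $q$; this is the degenerate "self-touch of a single curve with itself" which is excluded from being counted as a genuine second contact. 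If instead the two projected directions at $q$ are the two distinct values $\pm$, then the curves cross transversally there. The content of "at most one" is then: a second transversal crossing cannot occur. For that I would argue that between two consecutive crossings the two geodesics, having swapped which null direction they carry, cannot realign: the only way for the $+$-type and $-$-type null geodesics through the hypersurface to meet again transversally would require the vector $e_a$ (equivalently the function $B_\rho$ and the normalization) to be multivalued or for the geodesics to leave $O^S_\rho$ and re-enter, which the FPH conditions (b), (c) of Definition \ref{D5} together with connectedness of $O_\rho$ prevent.

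Concretely, for the "at most one" count I would set up the following: parametrize the two projected curves and consider the difference of their $(\theta,\varphi)$-coordinates as a function along a common parameter; a crossing is a zero of a planar vector-valued function, and because at each zero the two tangent directions are the two fixed distinct null directions $e_\rho \pm e_a$ (hence the crossing is transversal with a fixed orientation sign), two consecutive zeros would force an intermediate point where the tangents agree, i.e. a tangency, which by the uniqueness-of-null-geodesic argument collapses the two curves into one — contradiction with having two genuinely distinct geodesics. The self-intersection case is handled identically by taking $\gamma$ against a reparametrized copy of itself. I expect the main obstacle to be the bookkeeping at the boundary $\partial P_\rho$: there the 2-plane $\hat{O}^S_\rho$ degenerates to the single null line spanned by $\hat\kappa_\rho$ (Remark \ref{R3}), so a geodesic can "turn" there, and one must check carefully that a turning point in $\partial P_\rho \cap S_\rho$ — governed by condition (c) of Definition \ref{D5}, $\sigma(\hat\kappa_\rho,\hat\kappa_\rho)=0$ — sends the $+$-branch to the $-$-branch without producing an extra projected crossing that should be counted, i.e. that the reflection is "tangential" to $\partial P_\rho$ in projection. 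Everything away from $\partial P_\rho$ is essentially the dimension count $\dim \hat{O}^S_\rho = 2$ plus uniqueness of null geodesics; reconciling that count with the reflection behavior on the turning set is where the real care is needed.
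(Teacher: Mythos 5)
Your first paragraph already contains the paper's entire argument: on a 3-dimensional FPH the space $\hat{O}^S_\rho|_p$ is 2-dimensional, so the only null tangent directions with impact parameter $\rho$ at $p$ are $e_\rho\pm e_a$, and in the adapted chart these project onto $(\theta,\varphi)$ as $\pm(B_\rho/N_\rho)\partial_\varphi\pm\partial_\theta$; hence at most two branches of projected null geodesics pass through the image of $p$, meeting in a single transversal cross when $B_\rho\neq0$ and in a single tangency when $B_\rho=0$. That is all the proposition asserts and all the paper proves: it is a pointwise statement about the multiplicity of the contact \emph{at each} $p\in O^S_\rho$ (no more than two branches, hence no more than one crossing or one touching there), not a bound on the total number of times two projected curves meet along their whole length.

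The remainder of your proposal attempts to prove that stronger global bound, and this is where there is a genuine gap. First, the global statement is simply false: in Kerr a generic spherical photon orbit is non-periodic, winds densely over the annulus $\theta_{\min}\le\theta\le\theta_{\max}$, and its $(\theta,\varphi)$-projection self-intersects at infinitely many distinct points; the paper's own remark following the proposition says precisely this, the word ``single'' referring to the local multiplicity at each such point being a simple two-branch crossing. Second, the mechanism you invoke to exclude a second crossing does not work: the Rolle-type step (``two consecutive zeros of the difference force an intermediate tangency'') is valid for scalar functions but not for the plane-valued difference of two curves --- two planar curves can cross transversally at exactly two points with no tangency in between (two circles), and the two geodesics carry independent affine parameters, so there is no canonical common parameter along which to form the difference. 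The boundary and turning-point bookkeeping is likewise beside the point, since the proposition is stated only at points $p\in O^S_\rho$, away from $\partial P_\rho$. Keep the local dimension count and the projection computation; discard the global part.
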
 

\begin{proof}
In the case of a 3-dimensional hypersurface, by Proposition \ref{P1}, at any point $p\in O^S_\rho$ there are only two linearly independent null tangent vectors with given $\rho$, namely
\begin{align}
e_\pm=e_\rho\pm e_a.
\end{align} 
In the adapted basis, the null tangent vectors $\pm e_\pm$ have four projections onto
 $(\theta,\varphi)$, 
\begin{align}
\pm(B_\rho/N_\rho)\partial_\varphi\pm\partial_\theta,
\end{align} 
And accordingly, by virtue of Proposition \ref{P2}, in the case of $B_\rho\neq0$ there may exist an intersection of null geodesics such as a cross, and in the case of $B_\rho=0$, a touch (equal spatial vectors will have different time directions).
\end{proof}   

Note that if the hypersurface has self-intersections, then the number of intersections of null geodesics can also increase. Moreover, there can be an infinite number of intersections at a singular point. For example, in Zipoy-Voorhees metric on the FPH $\rho=0$ there are closed photon orbits in planes perpendicular to the equatorial one and intersecting along the axis of symmetry. At the same time, the point of intersection of the axis of symmetry and the FPH is a special point for the FPH, and the whole family intersects in it. In Kerr, an infinite number of spherical photon orbits intersect at the pole, which, however, they do not lie in any plane and can have single self-intersections.

Let us now explicitly define the notion of the spatial section $I$ for an arbitrary FPH of dimension ${\rm dim}(S)=m-1$. In the case of a stationary axially symmetric space, it is possible to choose the foliation of the manifold $M$ with hypersurfaces $N$ of constant time $\tau$ on which $U(1)$ symmetry is manifest. The Killing vector $\hat{\tau}$ at each point of the hypersurface $N$ then admits decomposition
\cite{Yoshino1}
\begin{equation}
\hat{\tau}=\alpha \hat{n}+\hat{\beta}, \quad \hat{\beta}=-\omega \hat{\varphi}, 
\end{equation}
where $\hat{n}$ is the unit time-like normal to $N$, and $\alpha$ and $\omega$ are the lapse and rotation  functions (ZAMO).
 
We now consider a timelike hypersurface $S$ in $M$ with the normal $\hat{r}$ intersecting $N$ orthogonally in the submanifold $I=S\cap N$ (this means that the normal $\hat{r}$ to $S$ in $M$ coincides with the normal to $I$ in $N$) 
\begin{displaymath}
\begin{tikzcd} 
S  \arrow{rr}{{}^S\sigma}[swap]{\hat{r}}  &  & M  \\ 
I \arrow{u}{\hat{n}}[swap]{} \arrow{rr}{{}^I\sigma}[swap]{\hat{r}} &  & N \arrow{u}{\hat{n}}[swap]{}
\end{tikzcd} 
\end{displaymath} 
For such an intersection, the second fundamental form ${}^S\sigma$ (in the case of hypersurfaces it is simply a scalar function since the normal is unique) the hypersurface $S$ in $M$ is expressed in terms of the second fundamental form ${}^I\sigma$ of $(m-2)$-dimensional spatial section of $I$ in $N$ and lapse function:
\begin{align}
&{}^S\sigma(u,v)={}^I\sigma(u,v),  \\ &{}^S\sigma(u,\hat{n})=\left(\frac{1}{2\alpha}\right)\LP u,\hat{r}(\omega)\hat{\varphi}\RP, \\
&{}^S\sigma(\hat{n},\hat{n})=\frac{\hat{r}(\alpha)}{\alpha}+\frac{1}{\alpha}\LP \hat{n}, \hat{r}(\omega)\hat{\varphi}\RP,
\end{align} 
where $u,v\in TI$. 

We again construct the basis $\left\{e_{\rho},e_a\right\}$, and expand the vector $e_{\rho}$ at the intersection $I = S\cap N$ as follows:
\begin{align}
&N_{\rho}e_{\rho}=A'_\rho \hat{n}-B'_\rho \hat{\varphi}, \\
&A'_\rho=\alpha A_\rho, \quad  B'_\rho=B_\rho+\omega  A_\rho. 
\end{align} 
Then $\hat{O}^S_\rho$-umbilic equation (\ref{a31}), (\ref{a32}) reduce to
\begin{align} 
&A'^2_\rho\left\{\frac{\hat{r}(\alpha)}{\alpha}\right\}-2\rho A'_\rho\left(\frac{\hat{r}(\omega)}{2\alpha}\right)\left\{\LP\hat{\tau},\hat{\tau}\RP\LP\hat{\varphi},\hat{\varphi}\RP-\LP\hat{\tau},\hat{\varphi}\RP^2\right\}+B'^2_\rho\tilde{\sigma}_{\varphi\varphi}+N^2_{\rho}\sigma_{aa}=0,\label{a10} \\
&B'_\rho\tilde{\sigma}_{a\varphi}=0, \quad \sigma_{ab}=0, \quad \sigma_{aa}=\sigma_{bb}.
\label{a33}
\end{align}
Further, we assume that the always mixed components are $\tilde{\sigma}_{a\varphi}=0$ with an appropriate choice of basis. The equation (\ref{a10}) can be simplified even more by requiring the fulfillment of the orthogonality condition $\LP \hat{n},\hat{\varphi}\RP=0$ and the absence of a violation of causality $\LP \hat{\varphi},\hat{\varphi}\RP>0$. In this case, the following relations arise
 \begin{align}
\left\langle \hat{\tau},\hat{\varphi} \right\rangle=-\omega\varphi^2, \quad \LP\hat{\tau},\hat{\tau}\RP=-\alpha^2+\omega^2\varphi^2,
\end{align} 
in particular,
\begin{align}
A'_\rho=\alpha\varphi^2(1-\omega \rho), \quad B'_\rho=-\rho \alpha^2. 
\end{align}
Then the $\hat{O}^S_\rho$-umbilic equation (\ref{a10}) and the causal region inequality $P_\rho$ (\ref{b1}) take the form (we omit the symbol ${}^I$):
\begin{align} 
(1-\omega \rho)^2\left\{\sigma_{aa}-\sigma_{nn}\right\}+2\alpha\rho(1-\omega \rho)\left(\frac{\hat{r}(\omega)}{2\alpha}\right) +\frac{\rho^2 \alpha^2}{\varphi^2}\left\{\sigma_{\varphi\varphi}-\sigma_{aa}\right\}=0, \label{a5}
\end{align}  
and
\begin{align} 
(1-\omega \rho)^2\varphi^2\geq\alpha^2\rho^2.
\end{align}  
$\hat{O}^S_\rho$-umbilic equation (\ref{a33}) can also be rewritten in terms of the principal curvatures  of the spatial section $I$ as
\begin{align} 
&\xi^2_\rho (\lambda_{\varphi}-\lambda_{a})+2 \xi_\rho v+\lambda_{a}-\lambda_{n}=0, \\ &v\equiv\left(\frac{\hat{r}(\omega)}{2\alpha}\right)\varphi, \quad
 \lambda_{n}\equiv-\left(\frac{\hat{r}(\alpha)}{\alpha}\right), \quad \xi_\rho\equiv\frac{\alpha\rho}{(1-\omega \rho)\varphi}, 
\label{a38}
\end{align} 
where $-1\leq\xi_\rho\leq1$ inside the causal region $P_\rho$. These equations are key in the explicit construction of fundamental photon hypersurfaces and are in many respects similar to the equations of transversaly trapping surfaces \cite{Yoshino1,Yoshino:2019dty,Yoshino:2019mqw}. They open the way to the application of Gauss-Codazzi-Ricci \cite{Chen} structural equations for the analysis of topological properties of fundamental photon hypersurfaces and construction of Penrose-type inequalities \cite{Shiromizu:2017ego,Feng:2019zzn,Yang:2019zcn}.
 
An important feature of this equation in the static case ($\omega=0$) is the parity in the parameter $\rho$ (Compare with \cite{Galtsov:2019fzq}). In particular, every subvariety of $S_\rho$ will coincide with $S_{- \rho}$. Note also the possibility of the presence of boundaries at the cross sections of fundamental photon hypersurfaces where the relation $\varphi$-TTS holds \cite{Galtsov:2019fzq}:
\begin{align} 
 \lambda_{\varphi}-\lambda_{n}\pm 2 v=0, \quad \xi_\rho=\pm1. 
\label{a1}
\end{align} 
For fundamental photon hypersurfaces with zero impact parameter $S_0$, the condition $\theta$-TTS must hold \cite{Galtsov:2019fzq}: 
\begin{align}
\lambda_{a}=\lambda_{n}, \quad \xi_\rho=0.
\end{align} 
In this case, the hypersurfaces spatial section itself is closed due to the fact that the causal region coincides with $M$ (if there are no subdomains of causality violation), but has singular points.

The necessary condition for the compatibility of the umbilical equation (\ref{a5}) with the definition of the causative region $\xi^2_\rho \leq1$ is reduced to the fulfillment of either of the following two inequalities (compare with \cite{Galtsov:2019bty})
\begin{align} 
&4v^2\geq(\sigma_{\varphi\varphi}-\sigma_{nn})^2, \label{a39} \\
&(\sigma_{\varphi\varphi}-\sigma_{nn})(\sigma_{\varphi\varphi}-\sigma_{aa})\geq2v^2.
\end{align}
In particular, the first of the conditions corresponds to the photon region in the Kerr metric \cite{Galtsov:2019bty}, and the second in the Zipoy-Voorhees metric \cite{Galtsov:2019fzq}.

Equation (\ref{a38}) allows us to express the principal curvatures of the spatial section $\lambda_{\varphi}$ and $\lambda_{a}$ in terms of the mean curvature $\lambda\equiv Tr(\sigma)/(m-2)$ \cite{Chen} (we consider only the case  $m =4$):
\begin{align} 
&\lambda_{\varphi}=\lambda+\left\{\lambda-\tilde{\lambda}_n\right\}/\Xi_\rho, \\ 
&\lambda_{a}=\lambda-\left\{\lambda-\tilde{\lambda}_n\right\}/\Xi_\rho,\\
&\tilde{\lambda}_n=\lambda_n-2v\xi_\rho, \quad \Xi_\rho=1-2\xi^2_\rho.
\end{align}  
In the case of $\xi_\rho=\pm1/\sqrt{2}$, the denominator of these expressions $\Xi_\rho$ vanishes, however, they remain finite, since the mean curvature in this case is expressed only through the derivatives of the lapse functions:
\begin{align} 
\lambda=\frac{\lambda_{\varphi}+\lambda_{a}}{2}=\lambda_n\mp\sqrt{2}v.
\end{align}
From the Gauss-Codazzi-Ricci equation \cite{Chen} we obtain the relationship between the mean and Gaussian curvature of each section (see for a review \cite{Yoshino1}):
\begin{align} 
&{}^IR=-2G(\hat{r},\hat{r})+2P(\lambda,\xi_\rho)+\left(\frac{2}{\alpha}\right)D^2\alpha+\frac{\varphi^2}{2\alpha^2}\left\{\hat{r}(\omega)^2-D(\omega)^2\right\}, \label{a50}\\
&P(\lambda,\xi_\rho)=-a\lambda^2+2b \lambda+c, \quad a=\frac{1-\Xi^2_\rho}{\Xi^2_\rho}, \quad b=\frac{\tilde{\lambda}_n+\lambda_n\Xi^2_\rho}{\Xi^2_\rho}, \quad
c=-\frac{\tilde{\lambda}^2_n}{\Xi^2_\rho}, 
\end{align}
where $D$ and ${}^IR$ are the covariant derivative and the Ricci scalar at the intersection $I=S\cap N$, and $G$ is the Einstein tensor of  $M$. In this paper, we consider only the case of closed (without boundary) fundamental photon hypersurface with $\rho=0$ from the complete family, which may have or not have  singular points, and all non-singular points $q\in O^S_\rho$. In this case, we can use the original Gauss-Bonnet theorem and prove a simple topological proposition.
 
\begin{theorem}
If at each point $q\in O^S_\rho$ and at the singular points $q$ of the closed convex section $I$ of the 3-dimensional FPH $S_0$ the condition holds
\begin{align} 
\frac{D(\alpha)^2}{\alpha^2}+\frac{\varphi^2}{4\alpha^2}\left\{\hat{r}(\omega)^2-D(\omega)^2\right\}\geq G(\hat{r},\hat{r}),
\label{a70}
\end{align}
then $I$ has topology of a sphere $\mathbb S^2$ (compare with an analogous proposition for transversaly trapping surfaces  \cite{Yoshino1,Yoshino:2019dty}).
\end{theorem}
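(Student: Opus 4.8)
The plan is to integrate the Gauss--Codazzi--Ricci identity (\ref{a50}) over the closed section $I$, convert the left-hand side into $4\pi\chi(I)$ via the Gauss--Bonnet theorem, and use the hypothesis (\ref{a70}) to conclude that this integral is strictly positive, which forces $\chi(I)=2$.

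First I would specialize (\ref{a50}) to $\rho=0$. Then $\xi_0=0$, hence $\Xi_0=1$ and $\tilde{\lambda}_n=\lambda_n$, so $a=0$, $b=2\lambda_n$, $c=-\lambda^2_n$ and $P(\lambda,0)=4\lambda_n\lambda-\lambda^2_n$. Putting $\xi_\rho=0$ in (\ref{a38}) gives the $\theta$-TTS relation $\lambda_a=\lambda_n$ on $S_0$; combining this with $2\lambda=\lambda_\varphi+\lambda_a$ the polynomial becomes
\begin{align}
P(\lambda,0)=2\lambda_a(\lambda_\varphi+\lambda_a)-\lambda^2_a=\lambda_a(2\lambda_\varphi+\lambda_a).
\end{align}
The crucial point is that convexity of $I$ in $N$ means its two principal curvatures $\lambda_\varphi,\lambda_a$ share a common sign, so $\lambda_a$ and $2\lambda_\varphi+\lambda_a$ have that same sign and $P(\lambda,0)>0$ at every non-singular point $q\in O^S_\rho$. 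At the isolated singular points ($\hat{\varphi}=0$) the umbilicity required by item d) of Definition \ref{D5} gives $\lambda_\varphi=\lambda_a$, so $P(\lambda,0)=3\lambda^2_a\geq0$, and the $\varphi^2$-terms of (\ref{a50}) vanish there provided $\hat{r}(\omega)$ and $D(\omega)$ stay bounded; thus these points contribute nothing below.

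Next I would integrate (\ref{a50}) over $I$. Since $\partial I=\emptyset$, the identity $\frac{2}{\alpha}D^2\alpha=2D^2(\ln\alpha)+2\frac{D(\alpha)^2}{\alpha^2}$ together with the vanishing of $\int_I D^2(\ln\alpha)\,dA$ (a total divergence) gives
\begin{align}
\int_I {}^IR\,dA=\int_I\Big[\,2P(\lambda,0)+2\frac{D(\alpha)^2}{\alpha^2}+\frac{\varphi^2}{2\alpha^2}\big\{\hat{r}(\omega)^2-D(\omega)^2\big\}-2G(\hat{r},\hat{r})\,\Big]\,dA.
\end{align}
Hypothesis (\ref{a70}), multiplied by $2$, states exactly that the last three terms in the bracket add up to a non-negative function on $I$; together with $P(\lambda,0)>0$ this yields $\int_I {}^IR\,dA>0$.

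Finally, since ${}^IR=2K$ for the $2$-surface $I$, Gauss--Bonnet gives $\int_I {}^IR\,dA=4\pi\chi(I)$, hence $\chi(I)>0$. The section $I$ is two-sided, carrying the globally defined unit normal $\hat{r}$, so it is orientable, and an orientable closed surface with positive Euler characteristic has $\chi=2$; therefore $I$ is homeomorphic to $\mathbb S^2$. I expect the main obstacle to be the sign bookkeeping that secures $P(\lambda,0)\geq0$ --- one must use the $\rho=0$ reduction $\lambda_a=\lambda_n$ together with the precise sense in which $I$ is assumed convex --- and, secondarily, checking that the axis points, where several quantities in (\ref{a50}) are a priori singular, genuinely drop out of the integral.
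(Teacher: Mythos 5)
Your proposal is correct and follows essentially the same route as the paper: specialize the Gauss--Codazzi--Ricci relation (\ref{a50}) to $\rho=0$, use convexity to get $P(\lambda,0)=\lambda_n^2+2\lambda_\varphi\lambda_a\geq0$ (your factorization $\lambda_a(2\lambda_\varphi+\lambda_a)$ is the same quantity), integrate, invoke hypothesis (\ref{a70}), and conclude via Gauss--Bonnet. Your explicit handling of the $\tfrac{2}{\alpha}D^2\alpha$ term as $2D^2(\ln\alpha)+2D(\alpha)^2/\alpha^2$ with the divergence integrating to zero, and your insistence on strict positivity to rule out $\chi=0$, only make explicit steps the paper leaves implicit.
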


\begin{proof} 
For proof, we note that at an ordinary point $q\in O^S_\rho$  
\begin{align} 
P(\lambda,0)=4\lambda_n \lambda-\lambda^2_n=\lambda^2_n+2\lambda_\varphi\lambda_a\geq0,
\end{align}
for convex spatial section $I$. For the singular point $p\in \partial  P_0$, the expression $P(\lambda,0)$ is also obviously non-negative, since the umbilical condition is stronger (all principal curvatures are equal). We now integrate the formula
 ($\ref{a50}$) over $I$ 
\begin{align} 
\int_I{}^IRdI\geq\int_I\Big\{-2G(\hat{r},\hat{r})+\frac{2D(\alpha)^2}{\alpha^2}+\frac{\varphi^2}{2\alpha^2}\left\{\hat{r}(\omega)^2-D(\omega)^2\right\}\Big\} dI.
\end{align}
From here, obviously, our statement follows from the Gauss-Bonnet theorem (every closed surface with a positive Euler characteristic has the topology of a sphere).
\end{proof} 

\begin{remark}
If the condition ($\ref{a70}$) is violated, then the solution, generally speaking, may contain fundamental photon hypersurfaces $S_0$ with spatial section of a different topology, for example, toric $\mathbb T^2$. Such a surface can be invariant (axially symmetric), has a zero Euler  characteristic, and, accordingly, Killing fields may not have singular points on it. Arrangement of axially symmetric closed convex surfaces of a different genus $g$ seems difficult.  
\end{remark}

\subsection{Coordinate system choice}
\label{SS4}
A fairly general metric satisfying the orthogonality properties $\LP \hat{n},\hat{\varphi}\RP=0$ is written as \cite{Yoshino1}:
\begin{align}
ds^2=-\alpha^2 d\tau^2+\gamma^2(d\varphi-\omega d\tau)^2+\phi^2 d\eta^2+\psi^2 d\zeta^2,
\label{a2}
\end{align}
where all metric functions are defined on a two-dimensional submanifold with coordinates $\left\{\eta, \zeta\right\}$. In the most general form, an invariant hypersurface in a given coordinate system can be associated with a curve in the subspace $\left\{\eta, \zeta\right\}$:
\begin{align}
\eta=f(s), \quad \zeta=g(s),
\end{align}
where $s$ is an arbitrary real parameter. Components of the second fundamental form and a normal to such a surface are:
\begin{align}
&\left\{\sqrt{\dot{g}^2\psi^2+\phi^2\dot{f}^2}\right\}\hat{r}=\dot{g}(\psi/\phi)\partial_\eta-\dot{f}(\phi/\psi)\partial_\zeta,\\
&\left\{\sqrt{\dot{g}^2\psi^2+\phi^2\dot{f}^2}\right\}\sigma_{\varphi\varphi}=-\dot{g}(\psi/\phi)(\ln\gamma)_\eta+\dot{f}(\phi/\psi)(\ln\gamma)_\zeta,\\
&\left\{\sqrt{\dot{g}^2\psi^2+\phi^2\dot{f}^2}\right\}^3\sigma_{aa}/(\psi\phi)=(\dot{g}\ddot{f}-\dot{f}\ddot{g})+\dot{f}^3(\phi^2/\psi^2)(\ln\phi)_\zeta-\dot{g}^3(\psi^2/\phi^2)(\ln\psi)_\eta\nonumber\\
&+\dot{g}\dot{f}\Big[\left\{2\dot{g}(\ln\phi)_\zeta+\dot{f}(\ln\phi)_\eta\right\}-\left\{2\dot{f}(\ln\psi)_\eta+\dot{g}(\ln\psi)_\zeta\right\}\Big].
\label{a6}
\end{align}
In particular, the first necessary condition (\ref{a39}) for the existence of a fundamental photon region of the Kerr type reads:
\begin{align}
\frac{\hat{r}(\omega)^2 \gamma^2}{\alpha^2}\geq\hat{r}(\ln \gamma/\alpha)^2.
\end{align}

The simplest case is represented by hypersurfaces of the form $\eta={\rm const}$. In this case, it is convenient to choose the natural parameterization $g(s)=s$ (applicable also in other cases), then we obtain:
\begin{align}
\sigma_{\varphi\varphi}=-\left(\frac{1}{\phi}\right)\partial_\eta\ln\gamma, \quad 
\sigma_{\vartheta\vartheta}=-\left(\frac{1}{\phi}\right)\partial_\eta\ln\psi, \quad \hat{r}=\left(\frac{1}{\phi}\right)\partial_\eta.
\end{align}
The umbilical equation (\ref{a5}) and the necessary condition (\ref{a39}) then take the form:
\begin{align}
\frac{\alpha^2\rho^2}{\gamma^2} \left(\partial_\eta\ln\frac{\gamma^2}{\psi^2}\right)+\left(\partial_\eta\ln\frac{\psi^2}{\alpha^2}\right)(1-\omega \rho)^2
=2(1-\omega \rho)\rho\partial_\eta\omega, \quad
\frac{\omega^2_\eta \gamma^2}{\alpha^2}\geq(\ln\gamma/\alpha)^2_\eta
\label{a3}
\end{align}
Note that for coordinates in which $\psi=\phi=\Omega$ (Weyl type \cite{Griffiths}) the equations (\ref{a6}) are simplified:
\begin{align}
&\Omega\left\{\sqrt{\dot{g}^2+\dot{f}^2}\right\}\hat{r}=\dot{g}\partial_\eta-\dot{f}\partial_\zeta,\\
&\Omega\left\{\sqrt{\dot{g}^2+\dot{f}^2}\right\}\sigma_{\varphi\varphi}=\dot{f}(\ln\gamma)_\zeta-\dot{g}(\ln\gamma)_\eta,\\
&\Omega\left\{\sqrt{\dot{g}^2+\dot{f}^2}\right\}\sigma_{\vartheta\vartheta}=(\dot{g}\ddot{f}-\dot{f}\ddot{g})/\left\{\dot{g}^2+\dot{f}^2\right\}
+\dot{f}(\ln\Omega)_\zeta-\dot{g}(\ln\Omega)_\eta.
\end{align}

\setcounter{equation}{0}

\section{Fundamenthal Photon Regions and examples}
\label{S2}

\subsection{Fundamenthal Photon Regions}
\label{SS5}
We now define the concept of a fundamental photon region and a fundamental photon function - a generalization of the classical three-dimensional photon region in the Kerr metric
  \cite{Grenzebach,Grenzebach:2015oea}.

\begin{definition}
The fundamental photon function $PF$ will be called the mapping
\begin{align}
PF:\rho \rightarrow \bigcup_{point}\left\{S_\rho\right\}
\end{align} 
which associates with each $\rho$ one or the union of several FPHs with the same $\rho$.
\end{definition}

\begin{remark}
The function $PF(\rho)$ can be continuous, parametrically defining some connected smooth submanifold in the extended manifold $\left\{M,\rho\right\}$, containing possibly even several different families of Lyapunov periodic orbits \cite{Grover:2017mhm}. At the same time, several continuous functions $PF(\rho)$ can exist in which different FPHs correspond to one $\rho$. In particular, for a given $\rho$, photon  and antiphoton FPHs ((un)stable photon surface \cite{Koga:2019uqd}) can occur simultaneously, indicating the instability of the solution
  \cite{Gibbons}.  
\end{remark}

\begin{definition}
The fundamental photon region is the complete image of the function $PF$
\begin{align}
PR=\bigcup_{\rho} PF(\rho).
\end{align} 
\end{definition}

\begin{remark}
A fundamental photon region is a standard region in the space $M$ in which there are FPOs and, in particular, the classical photon region in the Kerr metric. However, as was noted in \cite{Cederbaum:2019vwv}, this definition can be improved by adding to each point in the PR the subset corresponding to the captured directions in the tangent space. Nevertheless, it is clear that the only essential parameter determining $\hat{\kappa}^{\bot}$ is the parameter $\rho$ of the family, and therefore the choice of $PF(\rho)$ for the analysis of optical properties seems appropriate. The mapping $PF(\rho)$ can several times cover the image of $PR$ or part of it when the parameter $\rho$ is continuously changed. For example, in the case of a static space, $PR$ is covered at least 2 times, i.e. $PF(\rho)$ is a two-sheeted function.
\end{remark}

\subsection{Examples and numerical procedure}
\label{SS6}
As an illustration of the application of the above formalism, we consider its application to three examples of explicit solutions: Kerr, Zipoy-Voorhees with $\delta=2$\cite{Zipoy,Voorhees:1971wh,Griffiths,Kodama:2003ch}, the   $\delta=2$ Tomimatsu-Sato, \cite{Kodama:2003ch}, optical shadows for which were obtained in \cite{Grenzebach, Abdikamalov:2019ztb, Bambi:2010hf}. 

We will compare the structure of $PR$ and $PF(\rho)$ (which is continuous in these cases) of non-extremal solutions $0<p, q<1$ in spheroidal coordinates in an asymptotically flat region for which a metric in the form (\ref{a2}) is
\begin{align}
ds^2=-\alpha^2 d\tau^2+\gamma^2(d\varphi-\omega d\tau)^2+\phi^2 dx^2+\psi^2 dy^2,
\label{a30}
\end{align} 
where $x>1$, $-1\leq y\leq1$, and all metric components depend only on $\left\{x,y\right\}$.

The investigated solutions have additional $\mathbb Z_2$ symmetry under $y$-reflection  relative to the plane $y=0$. Thus, it is convenient to search for fundamental photon hypersurfaces with additional $\mathbb Z_2$ symmetry. Of course, FPHs without such $\mathbb Z_2$ symmetry can also exist (in pairs), as indicated by the existence of $\mathbb Z_2$-asymmetric fundamental photon orbits in some two-center solutions. In the TS/ZV case, they can be located in the vicinity of two horizons and be essentially non-spherical in the coordinates $(x,y)$. To find them, you can use the coordinates of Kodama and Hikida $(X,Y)$.

\begin{figure*}[tb]
\centering
\subfloat[][$\rho_{min}<\rho<\rho_{max}$]{
  	\includegraphics[scale=0.35]{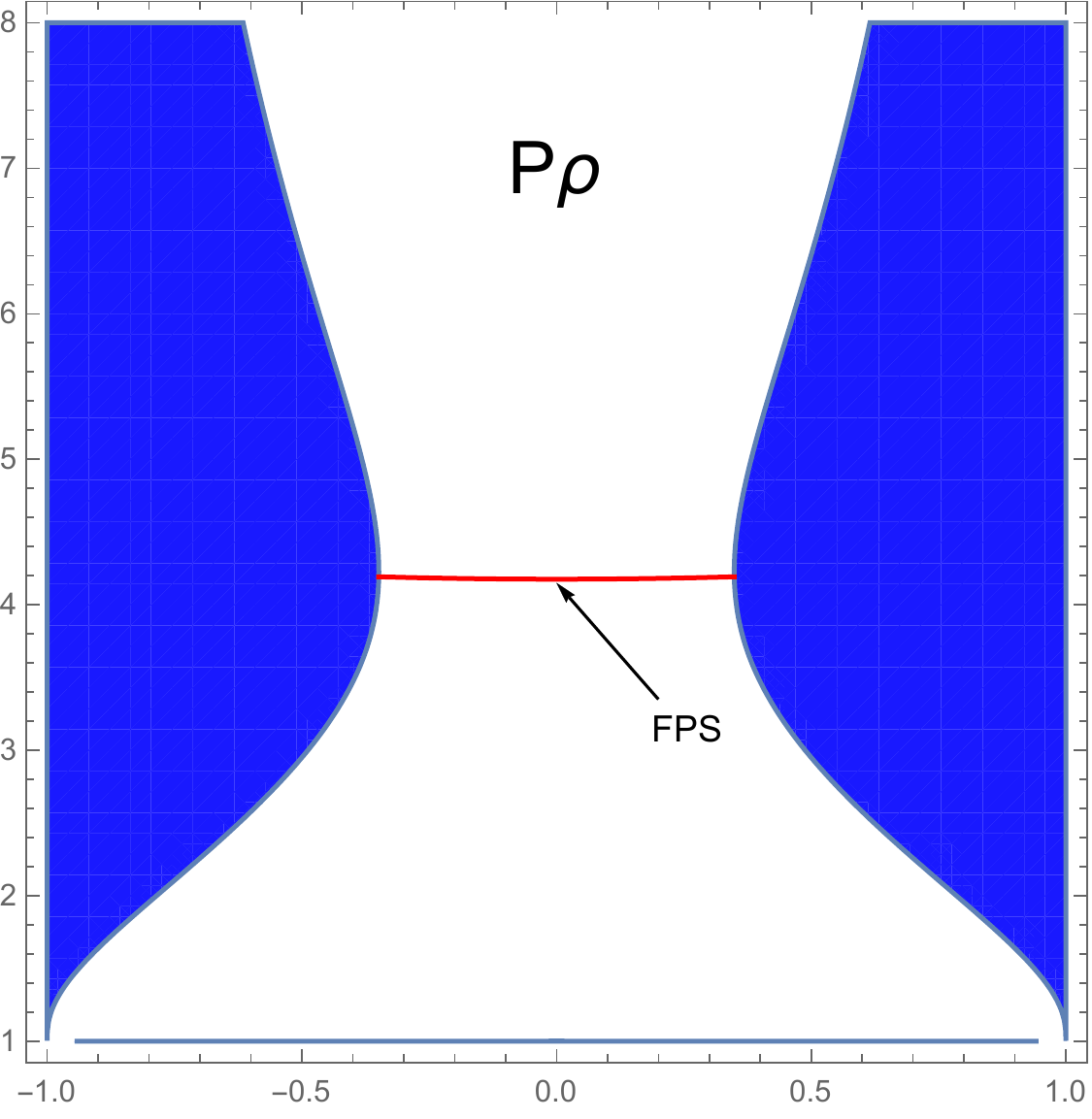} 
		\label{PR1}
 }
\subfloat[][$\rho=\rho_{max}$]{
  	\includegraphics[scale=0.35]{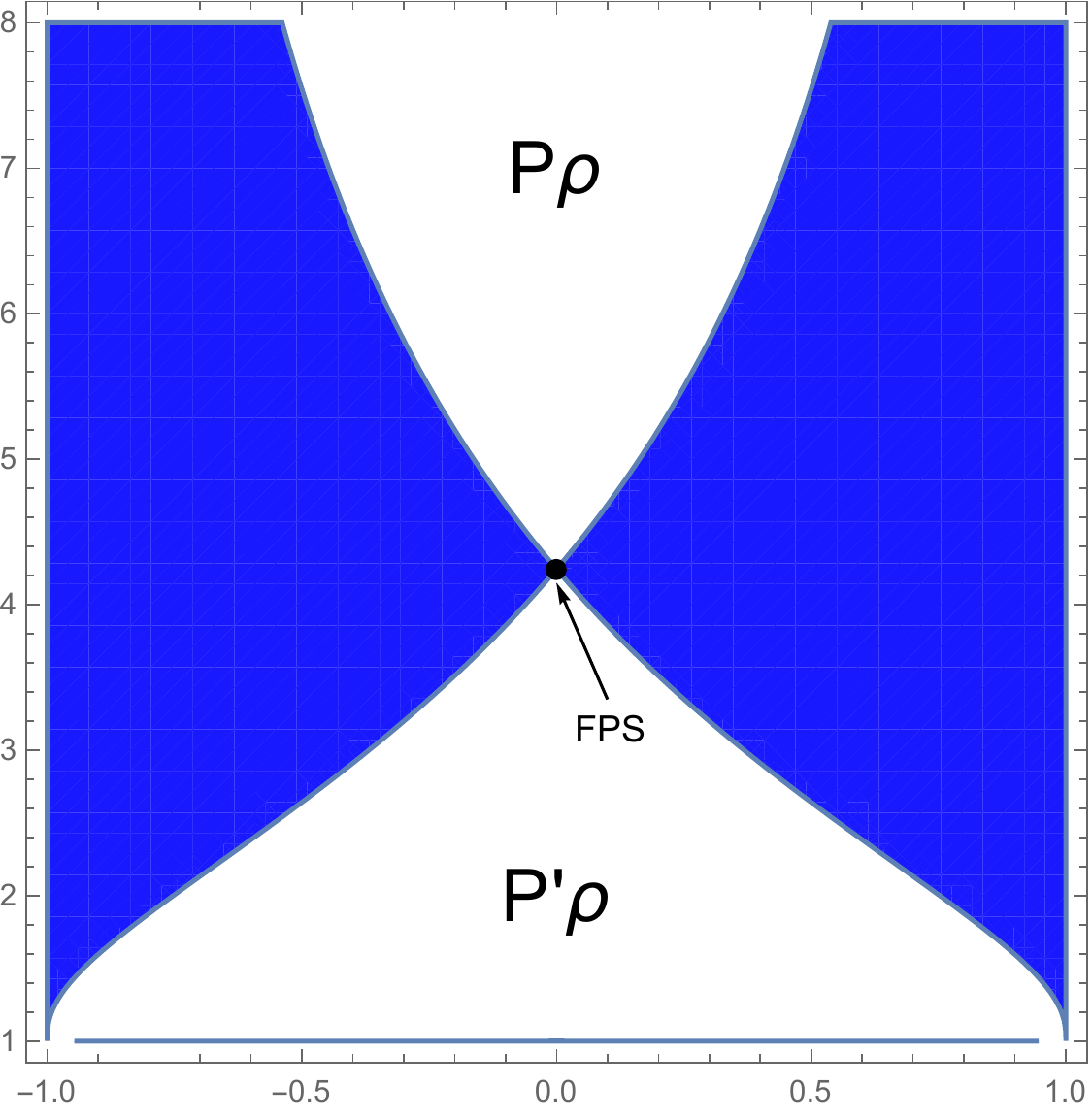} 
		\label{PR2}
 }
\subfloat[][$\rho>\rho_{max}$]{
  	\includegraphics[scale=0.35]{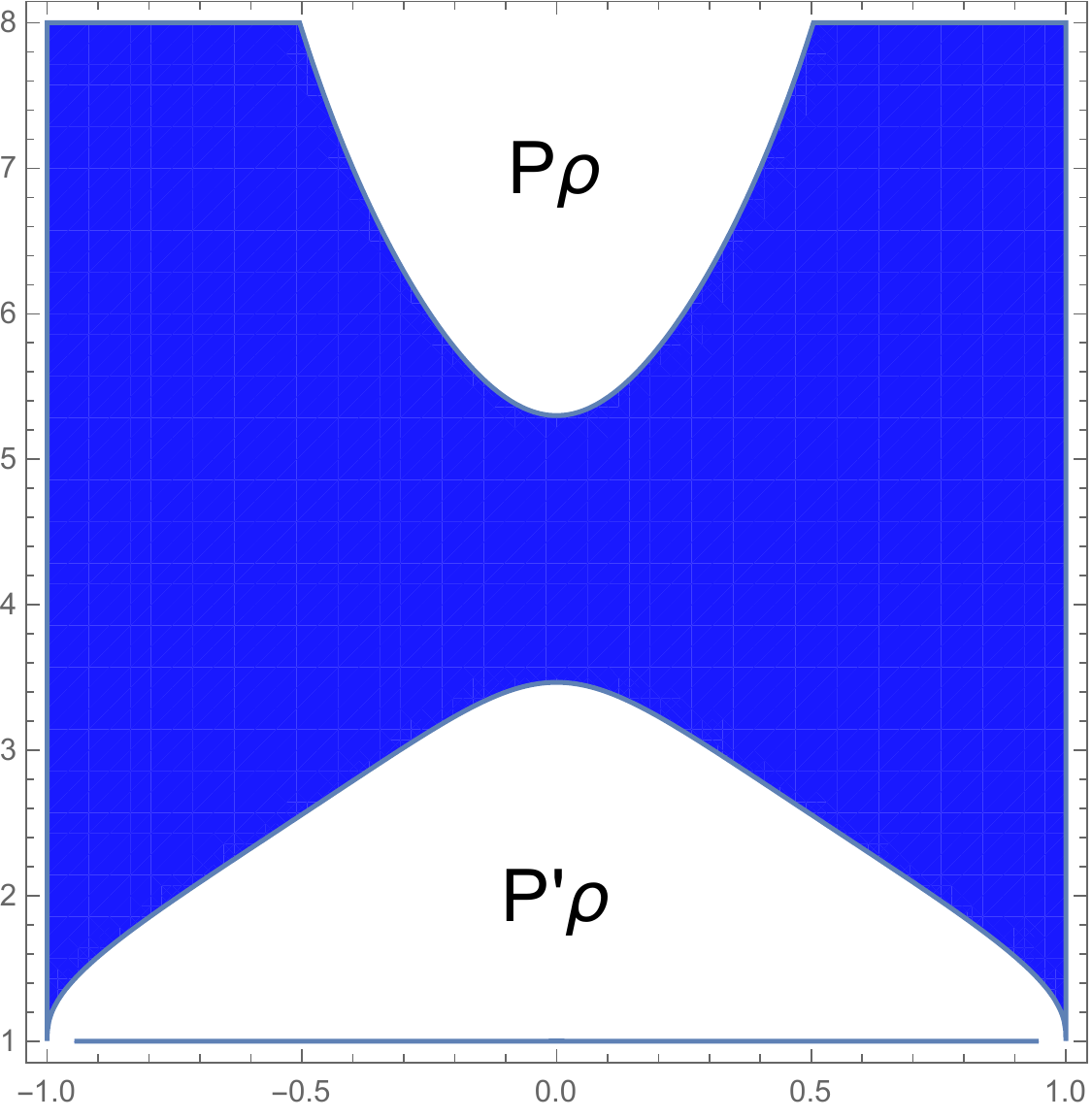} 
		\label{PR3}
 }
\caption{Causal region in the Tomimatso-Sato solution.}.
\label{P}
\end{figure*}
In all the examples we are considering, the causal region $P_\rho$ (the accessible region of some effective potential) contains both horizons/singularity and spatial infinity if and only if $\rho_{min}<\rho<\rho_{max}$ Fig. \ref{PR1}. Otherwise, there are two connected regions $P_\rho$, one of which contains spatial infinity, and the other --- the horizon/singularity Fig. \ref{PR3}. Thus, spatial infinity is separated from the horizon/singularity (no null geodesics with this impact parameter can  connect the horizon and spatial infinity). Therefore, we will consider only the range $\rho_{min}<\rho<\rho_{max}$ as the domain of definition of the function $PF(\rho)$.

To determine the values of the parameter $\rho$ at which the causal region breaks, we use the $\mathbb Z_2$ symmetry of the solution and find the conditions under which this discontinuity occurs at the equatorial plane:
 \begin{align}
&(1-\omega \rho)^2\varphi^2-\alpha^2\rho^2|_{y=0}=0,\\
&\partial_y\left\{(1-\omega \rho)^2\varphi^2-\alpha^2\rho^2\right\}|_{y=0}=0,
\end{align}
where the first condition means that the boundary of the causal region intersects the equatorial plane, and the second, that for larger and smaller $x$ we again fall into the causal region, that is, the desired point is really a discontinuity point Fig. \ref{PR2}. It is easy to verify that we obtain exactly the familiar conditions (\ref{a1}), but limited to the equatorial plane.
\begin{align}
&\lambda_{\varphi}-\lambda_{n}\pm2 v=0|_{(y=0,f'(0)=0)}, \\ &\rho_\pm=\frac{\gamma}{\gamma\omega\pm\alpha}|_{y=0}.
\label{a8}
\end{align}  
As a result, there will be equatorial circular photon orbits (fundamental photon submanifolds of dimension $n=2$ or $\varphi$-TTS) at the discontinuity points, and the fundamental photon region will interpolate between them, similar to what it was in Zipoe-Voorhees \cite{Galtsov:2019fzq}.

To find $\mathbb Z_2$ - symmetric fundamental photon hypersurfaces, we use the shooting method described in \cite{Galtsov:2019fzq}. We solve the differential equation (\ref{a5}) numerically with a choice of parameterization in (\ref{a6}) of the form $g(y)=y$. Moreover, the surface is uniquely determined by the function $f(y)$. The initial conditions are of the form $f(0)=x_\rho$ and $\dot{f}(0)=0$, where $x_\rho$ is determined by the condition that the boundary points of the resulting hypersurfaces satisfy the boundary condition (\ref{a1}).
\begin{figure*}[tb]
\centering
\subfloat[][$J=0.1$]{
  	\includegraphics[scale=0.35]{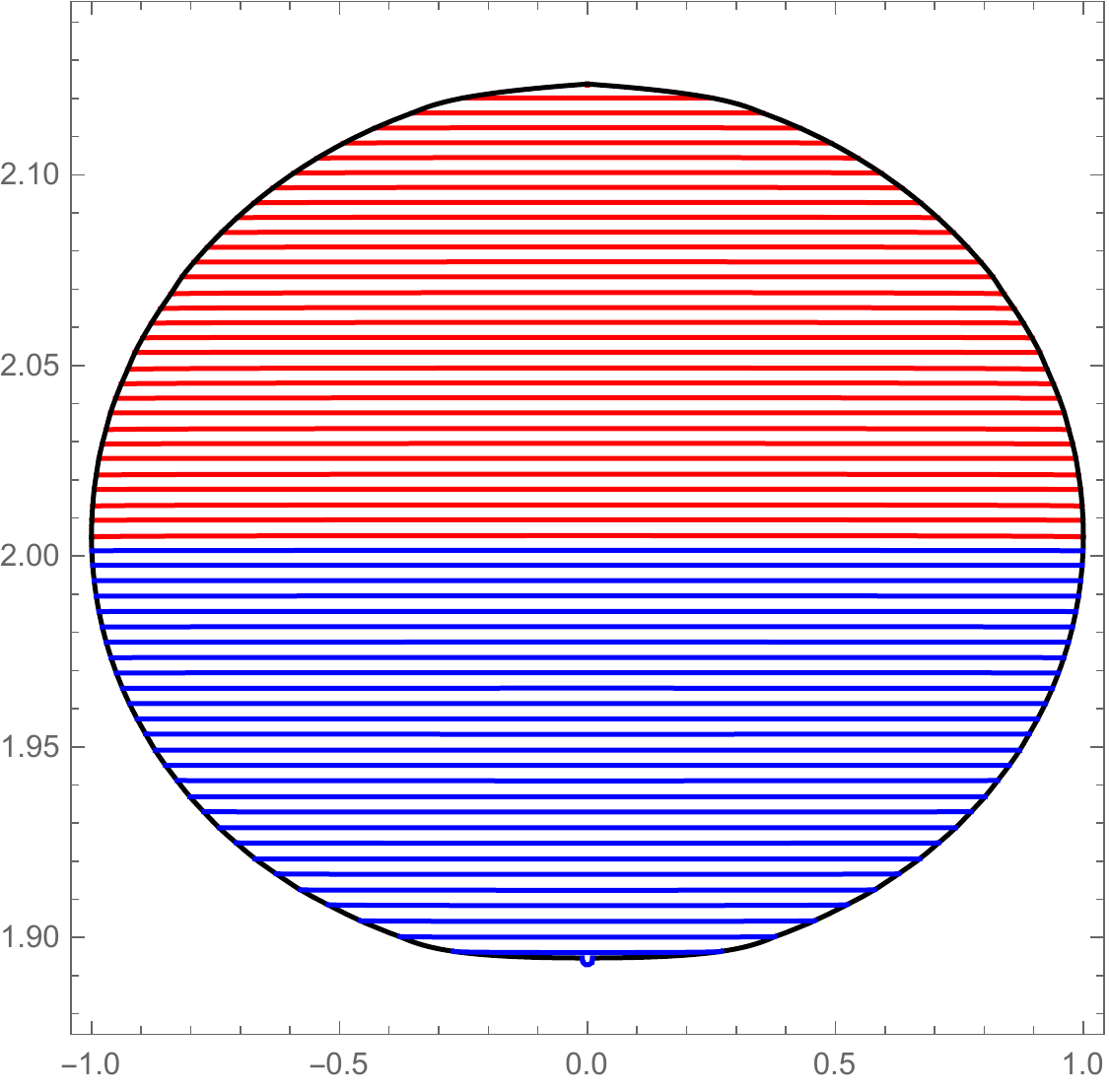} 
		\label{KPR1}
 }
\subfloat[][$J=0.5$]{
  	\includegraphics[scale=0.35]{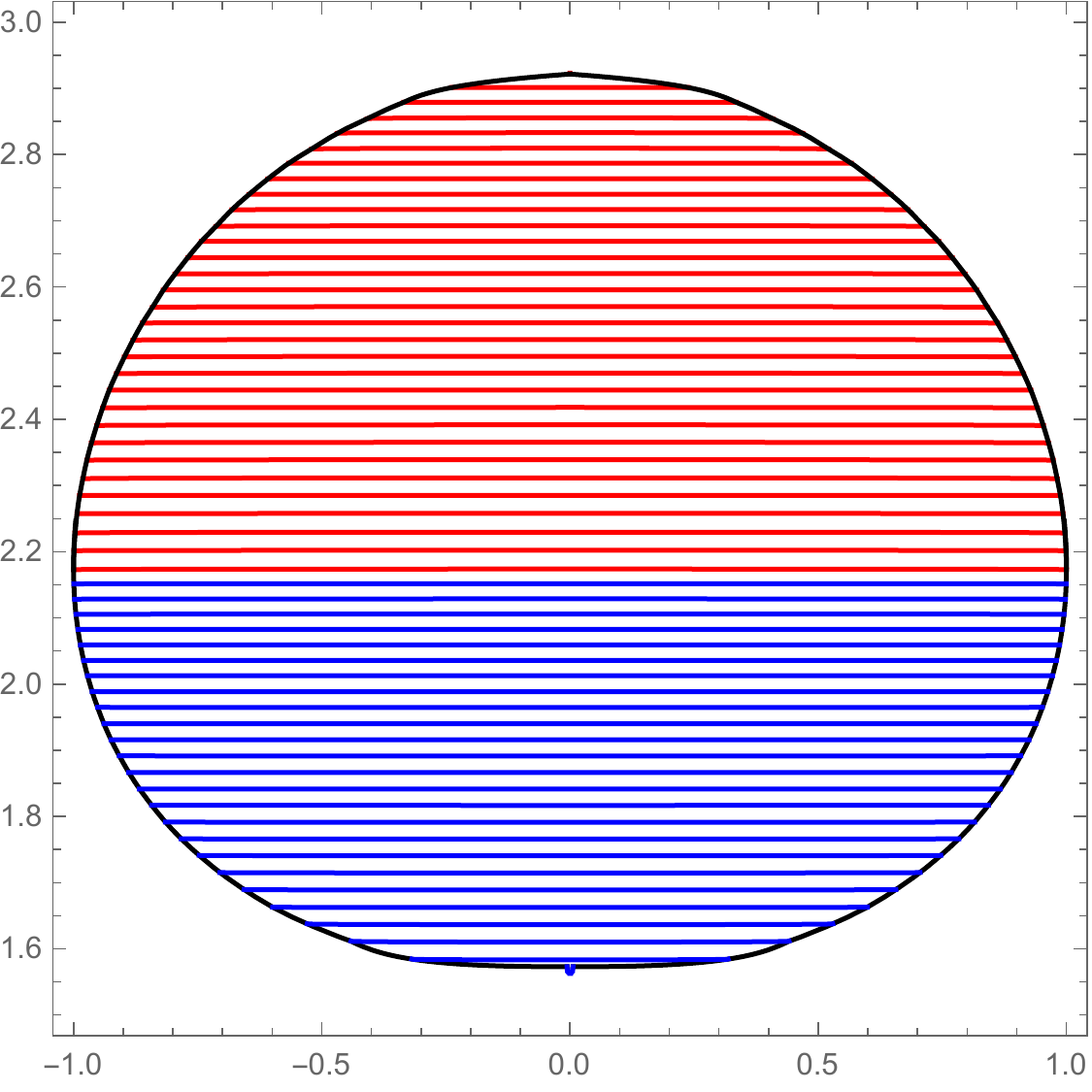} 
		\label{KPR5}
 }
\subfloat[][$J=0.9$]{
  	\includegraphics[scale=0.35]{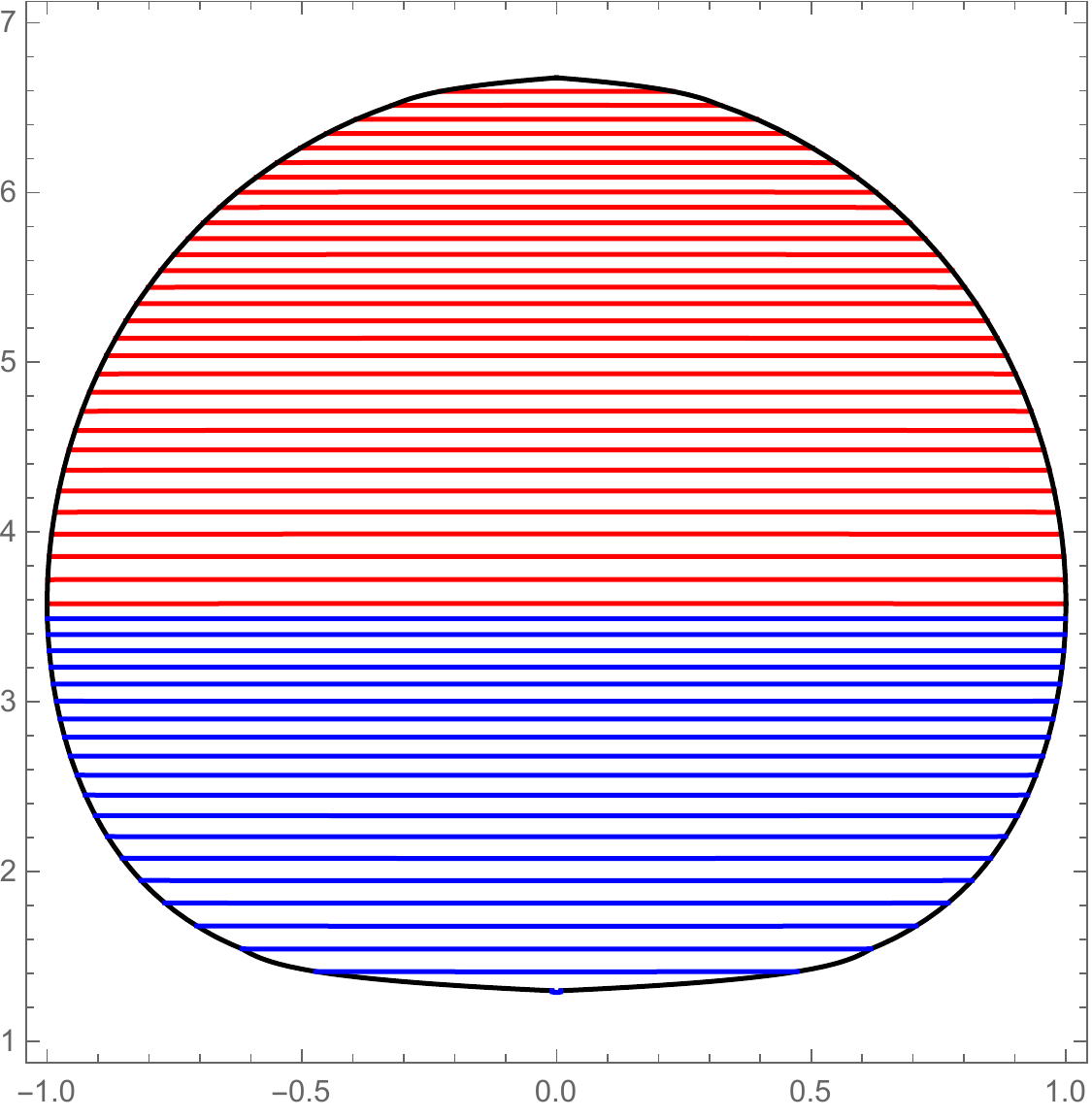} 
		\label{KPR5}
 }\\
\subfloat[][$J=0.1;0.5;0.9$]{
  	\includegraphics[scale=0.35]{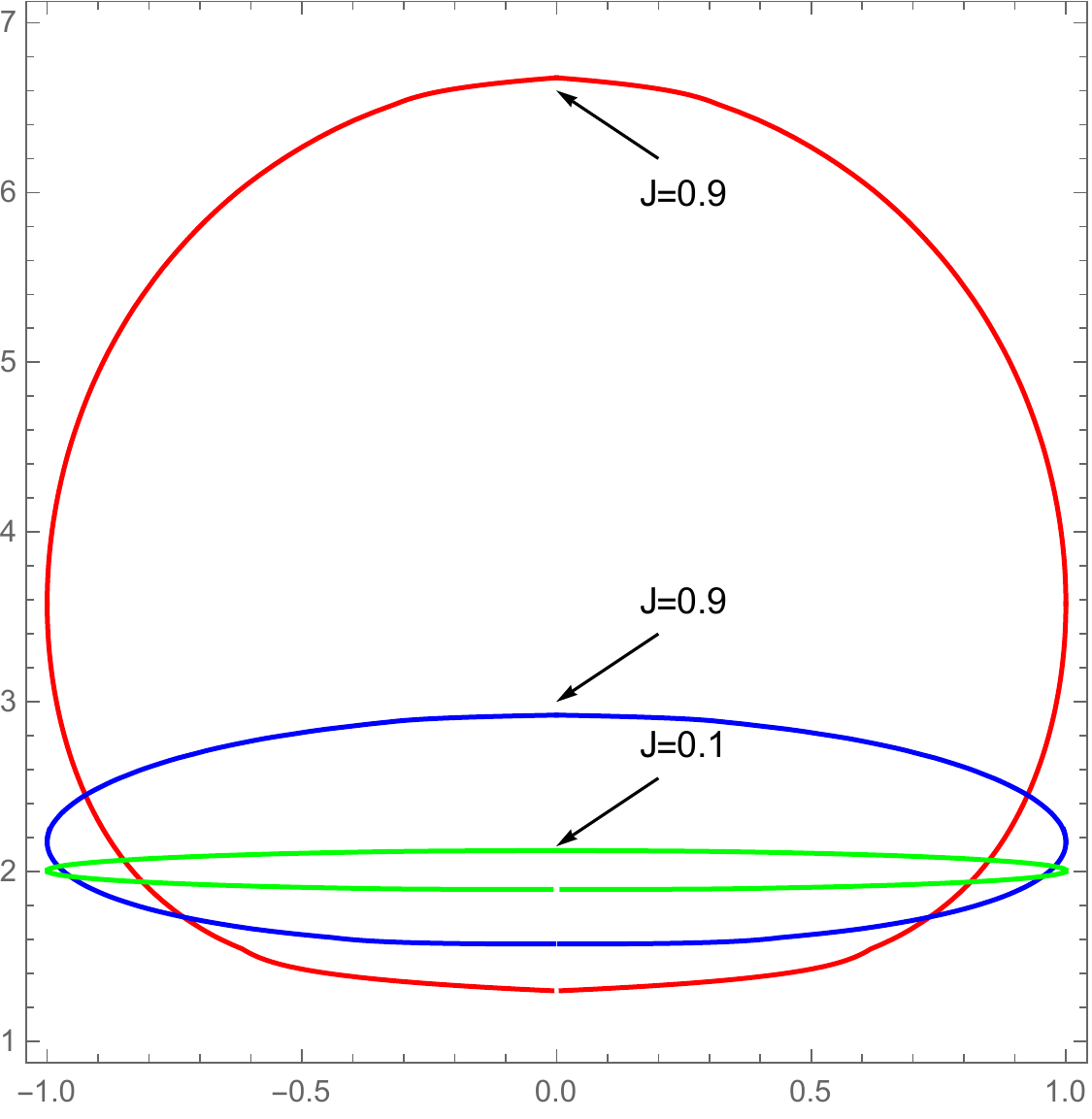} 
		\label{KPRK}
 }
\caption{Photon region in the Kerr metric.}
\label{K}
\end{figure*}

\subsection{Kerr metric}
\label{SS7}
Now let's analyze a concrete example of a Kerr solution using spheroidal coordinates. The metric has the form (\ref{a30}) with the following metric functions (see \cite{Griffiths}):
\begin{align}
&\alpha^2=\frac{\tilde{\gamma}^2\tilde{\alpha}^2}{\tilde{\gamma}^2-\tilde{\alpha}^2\tilde{\omega}^2}, \quad \gamma^2=\tilde{\gamma}^2-\tilde{\alpha}^2\tilde{\omega}^2, \quad \omega=\frac{\tilde{\omega}\tilde{\alpha}^2}{\tilde{\gamma}^2-\tilde{\alpha}^2\tilde{\omega}^2},\\
&\phi^2=m^2\frac{(p x+1)^2+q^2y^2}{x^2-1}, \quad
\psi^2=m^2\frac{(p x+1)^2+q^2y^2}{1-y^2}.
\end{align}
where
\begin{align}
\tilde{\alpha}^2=\frac{p^2x^2+q^2y^2-1}{(p x+1)^2+q^2y^2}, \quad \tilde{\gamma}^2=\tilde{\alpha}^{-2}(x^2-1)(1-y^2), \quad \tilde{\omega}=-\frac{2q(1-y^2)(p x+1)}{p(p^2x^2+q^2y^2-1)}.
\end{align}
Here $q$ is the rotation parameter associated with the angular momentum $J=M^2q$, where $M=m$ is the ADM mass of the solution and $p=\sqrt{1-q^2}$. In the future, we will compare solutions that have the same physical parameters $M$ and $J$.

In the Kerr metric, the causal region $P_\rho$ contains both horizon/singularity and spatial infinity if and only if $p=\sqrt{1-q^2}$, where the minimum and maximum values of $\rho$ are determined from (\ref{a8}) as the maximum and minimum roots of the equation
\begin{align}
q\left\{3+(q-\rho/m)^{2/3}\right\}+\left\{-3+(q-\rho/m)^{2/3}\right\}\rho/m=0.
\end{align}
and correspond to two equatorial circular photon orbits. Apart from them, as is well known in the Kerr metric, so-called spherical orbits exist with constant value of the Boyer-Lindquist radial coordinate $r$  \cite{Wilkins:1972rs,Teo,Paganini:2016pct}, they correspond to a discrete set of tangential directions on the sphere $r={\rm const} $. Spherical orbits with different $r$ then fill the three-dimensional domain - the {\em photon region} (PR) \cite{Grenzebach,Grenzebach:2015oea,Galtsov:2019bty} which is an important feature of rotating spacetimes. This photon region is a special case of the fundamental photon region introduced by us.
 
In a spheroidal coordinate system, $PR$ and $PF(\rho)$ can be described graphically by considering their section with the plane $\tau={\rm const}$ and $\varphi={\rm const}$ in the adapted coordinate system \cite{Grenzebach}. Moreover, $PR$ is a two-dimensional region on the submanifold - $\left\{x,y\right\}$ as shown in the Figs. \ref{K}, and $PF(\rho)$ is some 2-dimensional submanifold of the three-dimensional space $\left\{x,y,\rho\right\}$. Red and blue lines depict the cross section of individual fundamental photon hypersurfaces $PF(\rho_i)$ with positive and negative values of the impact parameter, respectively. Moreover, for the Kerr metric, the function $PF(\rho_i)$ once covers $PR$ with a continuous change in the parameter of the family $\rho$ i.e. is univalent, and each individual fundamental photon hypersurfaces has the form $x={\rm const}$.

In Kerr metric, such a univalence of $PF(\rho)$ means that the minimum and maximum values of the impact parameter corresponding to the minimum and maximum radii of the equatorial photon orbits and, as a consequence, the minimum and maximum size of the shadow (from the center point to the boundary for the equatorial observer\cite{Abdikamalov:2019ztb,Galtsov:2019fzq}). Thus, the univalent function $PF(\rho)$ corresponds to a shadow with maximum and minimum size at the equatorial plane. The sphericity of the fundamental orbits corresponds to the integrability of the corresponding dynamical system \cite{Pappas:2018opz, Glampedakis:2018blj} and the existence of an additional conserved quantity associated with the Killing tensor \cite{Kubiznak:2007kh}.
\begin{figure}[tb]
\centering
\subfloat[][$PR$]{
  	\includegraphics[scale=0.35]{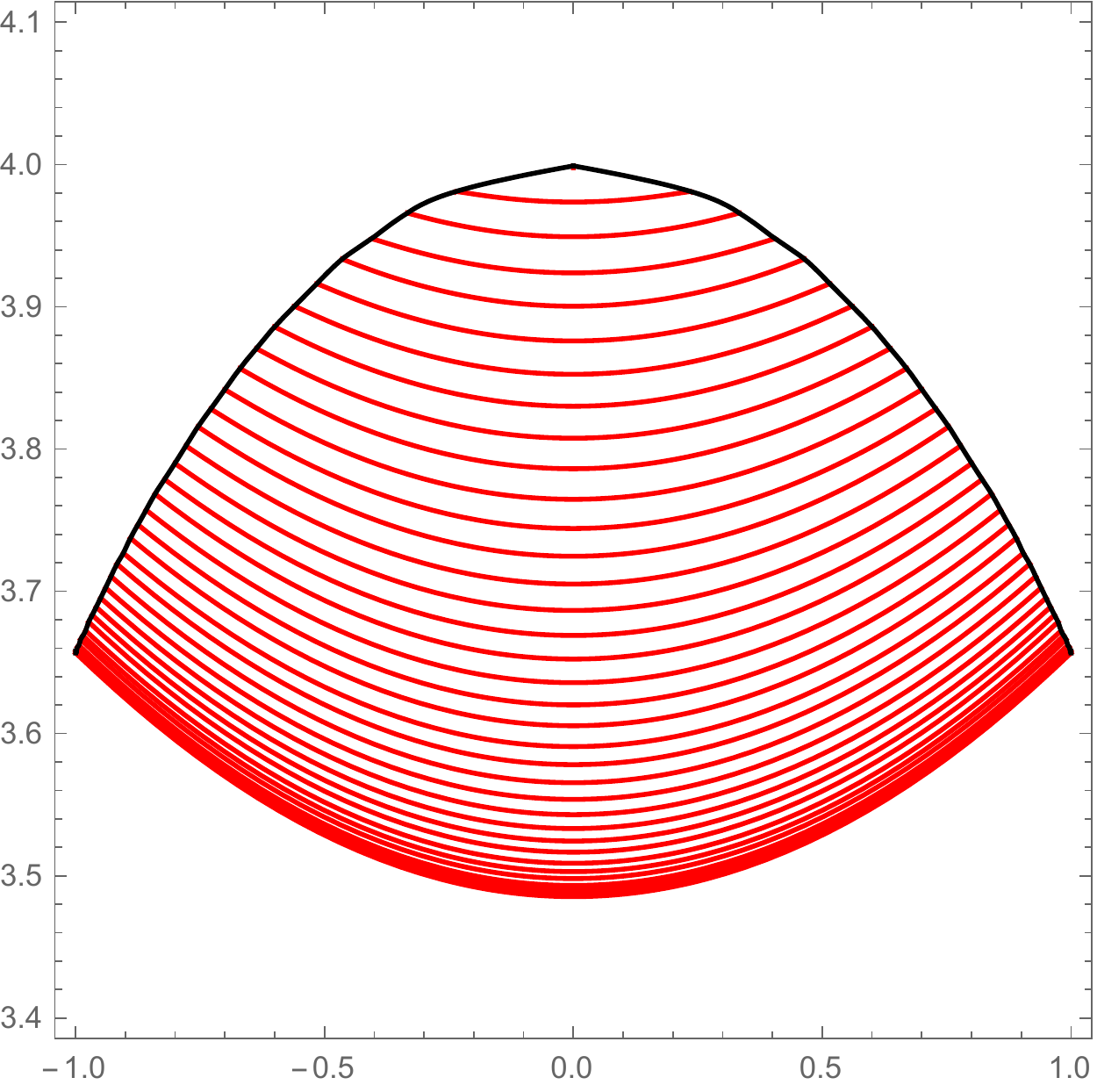} 
		\label{KPZ}
 }
\subfloat[][$PF(\rho)$]{
  	\includegraphics[scale=0.35]{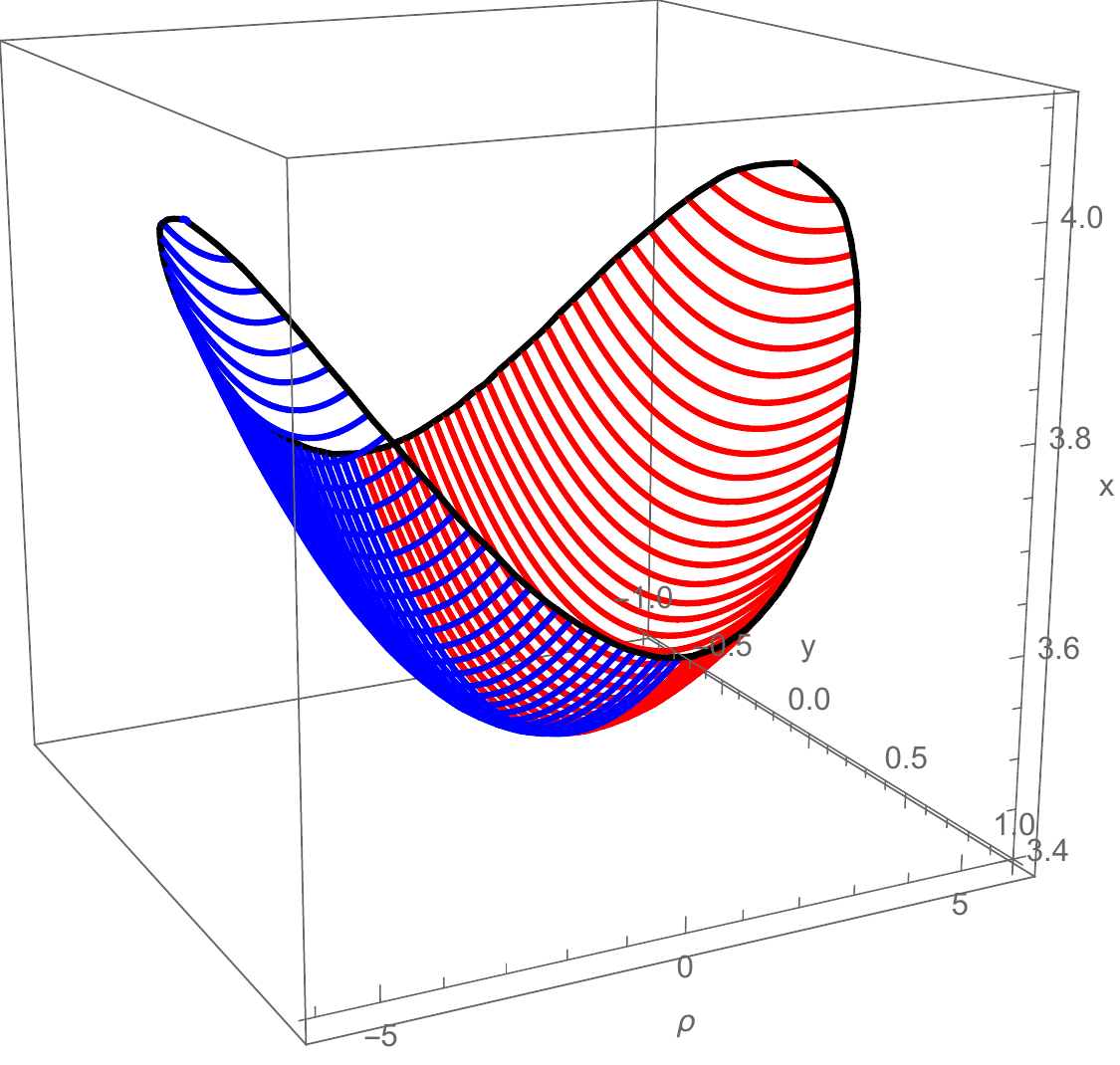} 
		\label{KPZ3D}
 }
\caption{$PR$ and $PF(\rho)$ in the Zipoy-Voorhees metric ($M=1$, $\delta=2$).} 
\label{Z}
\end{figure}

\subsection{Zipoy-Voorhees metric} 
\label{SS8}
As a non-trivial example of a static axially symmetric asymptotically flat spacetime not admitting the standard photon surfaces but contains a non-spherical photon region \cite{Galtsov:2019fzq}, we will consider the Zipoy-Voorhees (ZV) vacuum solution \cite{Zipoy,Voorhees:1971wh,Griffiths,Kodama:2003ch} which in the spheroidal coordinates reads:
\begin{align}
&\alpha^2=\left(\frac{x-1}{x+1}\right)^{\delta}, \quad \gamma^2=m^2\left(\frac{x+1}{x-1}\right)^{\delta}(x^2-1)(1-y^2),\\
&(x^2-1)\phi^2=m^2(x^2-1)^{\delta^2}\left(\frac{x+1}{x-1}\right)^{\delta}(x^2-y^2)^{1-\delta^2},\\
&(1-y^2)\psi^2=m^2(x^2-1)^{\delta^2}\left(\frac{x+1}{x-1}\right)^{\delta}(x^2-y^2)^{1-\delta^2}.
\end{align}
This solution can be interpreted as an axially symmetric deformation of the Schwarzschild metric with the deformation parameter $\delta\geq0$, to which it reduces for $\delta=1 $. For $\delta=2$ it can be interpreted as a two-center solution, a particular non-rotation version of the Tomimatsu-Sato metric \cite{Kodama:2003ch}. The Arnowitt-Deser-Misner mass is equal to  $M=m\delta$. The outer domain in which we are interested in extends as $x>1$.

As in Kerr, the causal region $P_\rho$  contains both singularity and spatial infinity if and only if $\rho_{min}<\rho<\rho_{max}$, where from (\ref{a8}) it can be found
\begin{align}
-\rho_{\rm min}=\rho_{\rm max}=m(2\delta-1)^{-\delta+1/2}(2\delta+1)^{\delta+1/2}.
\end{align}
Otherwise, there are two connected domains $P_\rho$ one of which contains spatial infinity.

In our paper \cite{Galtsov:2019fzq} it was demonstrated that the hypersurfaces of the fundamental photon region (generalized photon region) are significantly different from surfaces of constant radius $x={\rm const}$. As a result, the corresponding dynamical system may contain chaos regions \cite{Lukes} since it contains non-equatorial non-spherical closed photon orbits \cite{Pappas:2018opz, Glampedakis:2018blj}.

The result of the numerical calculation for $\delta=2$ is shown in the Figs. \ref{KPZ}. For an arbitrary $\delta$, the analysis was carried out in detail earlier, however, in a different coordinate system, but the basic laws will obviously be valid here too. Note that the hypersurfaces of the photon region are compressed on the equatorial plane $y=0$ and extend to the poles $y=\pm1$ for $\delta>1$, in addition, the photon hypersurfaces are determined for each possible value of the impact parameter and therefore the solution creates a complete set relativistic images \cite{Virbhadra:1999nm, Virbhadra:2008ws, Virbhadra:2002ju} along the entire border of the shadow.
  
As we said in the case of a static space, $PF(\rho)$ is at least a two-sheeted function. In this case, the fundamental photon region is more appropriate to consider as a hypersurface in the coordinates $\left\{x,y,\rho\right\}$. The corresponding image of the continuous function $PF(\rho)$ in our case is shown in the Fig. \ref {KPZ3D}. The fact that $PF(\rho)$ is two-sheeted leads to the obvious additional symmetry of the shadow of any static axially symmetric solution. In addition, the shadow will have a maximum size along the equatorial section (a more distant fundamental photon surface at the maximum ($\rho$), and the minimum along the vertical, so the shadow of the solution will be flattened in the vertical direction \cite{Abdikamalov:2019ztb,Galtsov:2019fzq}. 
 
\begin{figure*}[tb]
\centering
\subfloat[][$J=0.1$, $PR$]{
  	\includegraphics[scale=0.32]{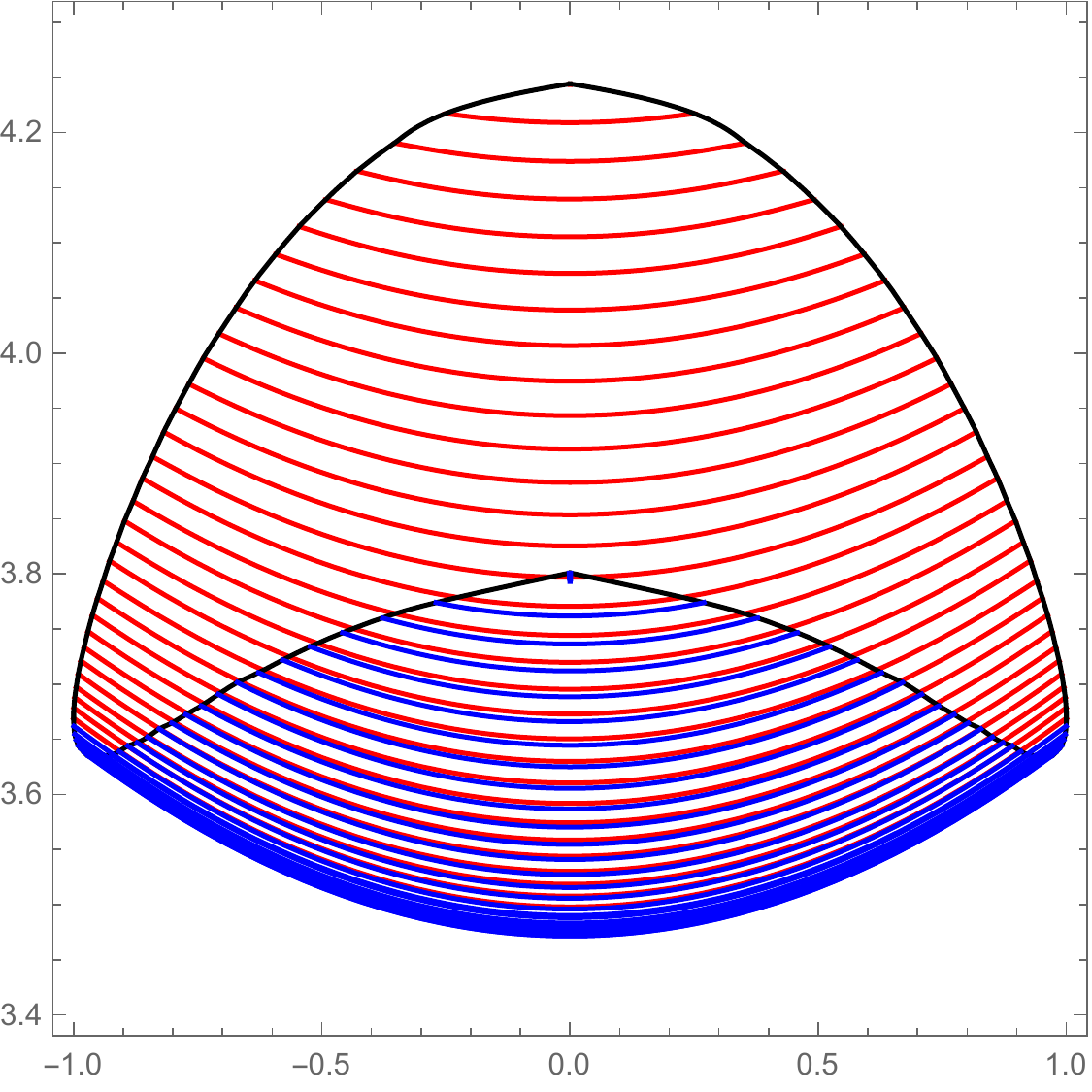} 
		\label{PRTS1}
 }
\subfloat[][$J=0.5$, $PR$]{
  	\includegraphics[scale=0.32]{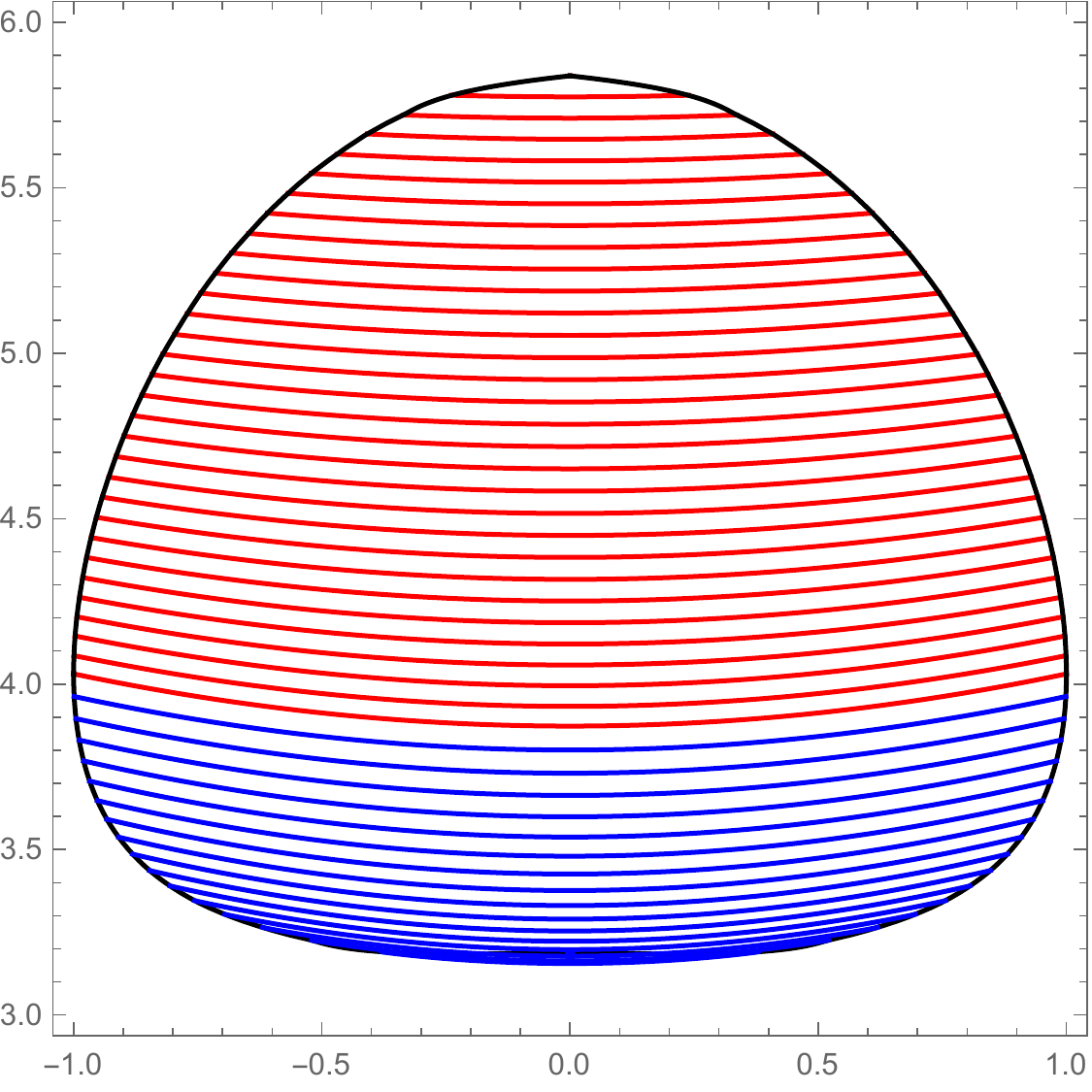} 
		\label{PRZ1}
 } 
\subfloat[][$J=0.9$, $PR$]{
  	\includegraphics[scale=0.32]{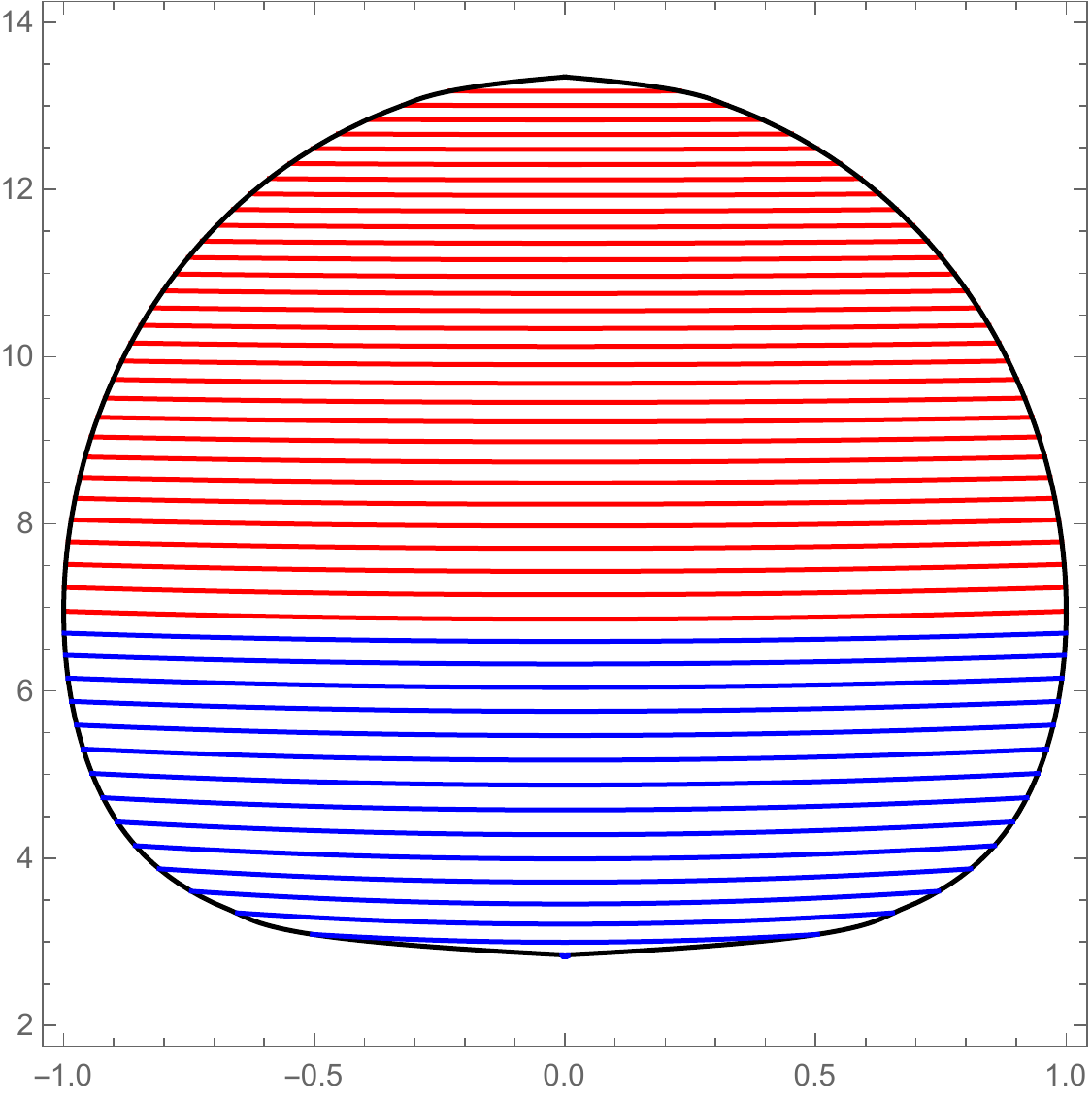} 
		\label{PRZ2}
 }
\\
\subfloat[][$J=0.1$, $PF(\rho)$]{
  	\includegraphics[scale=0.32]{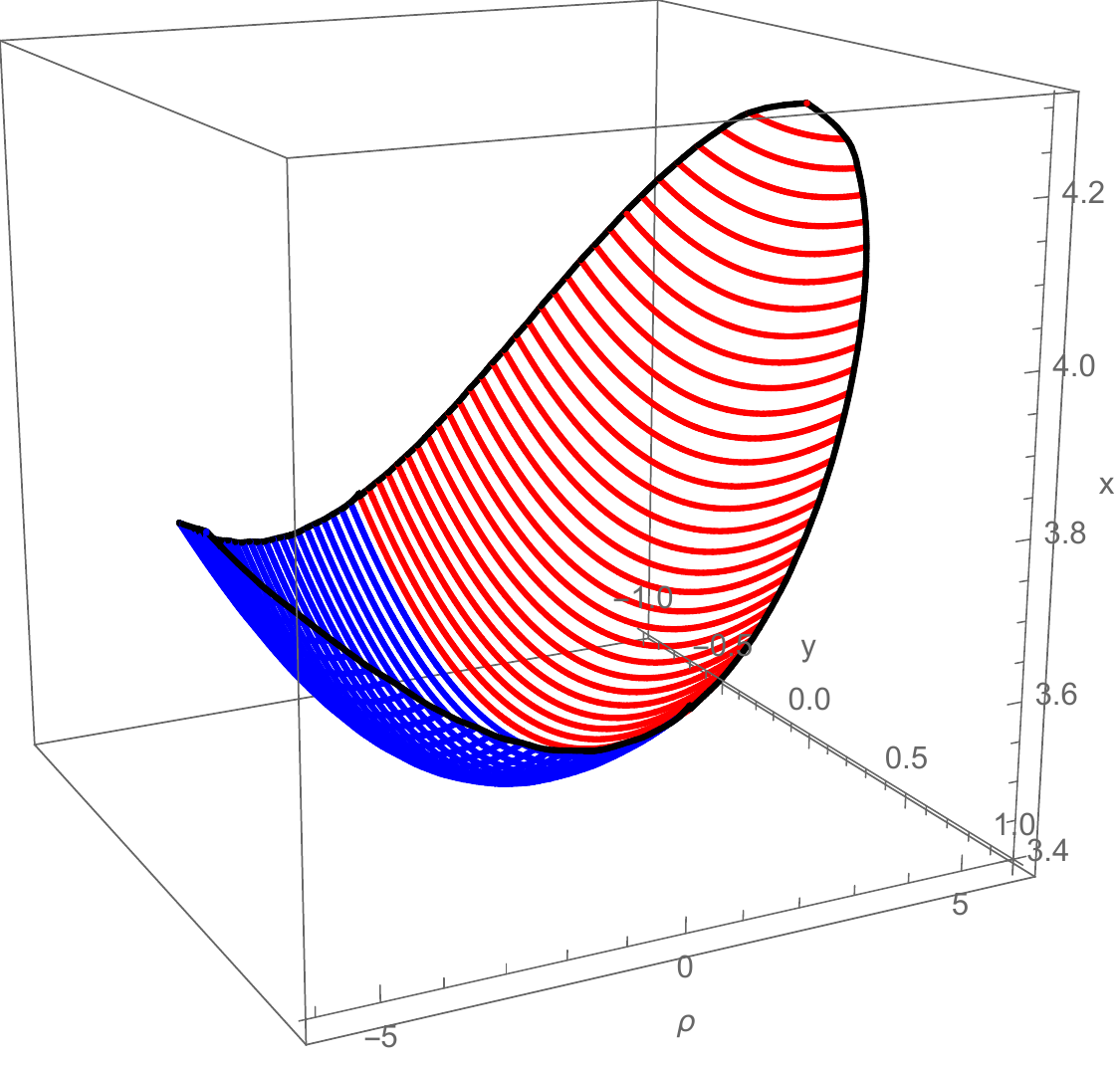} 
		\label{PRTS3D}
 } 
\subfloat[][$J=0.5$, $PF(\rho)$]{
  	\includegraphics[scale=0.32]{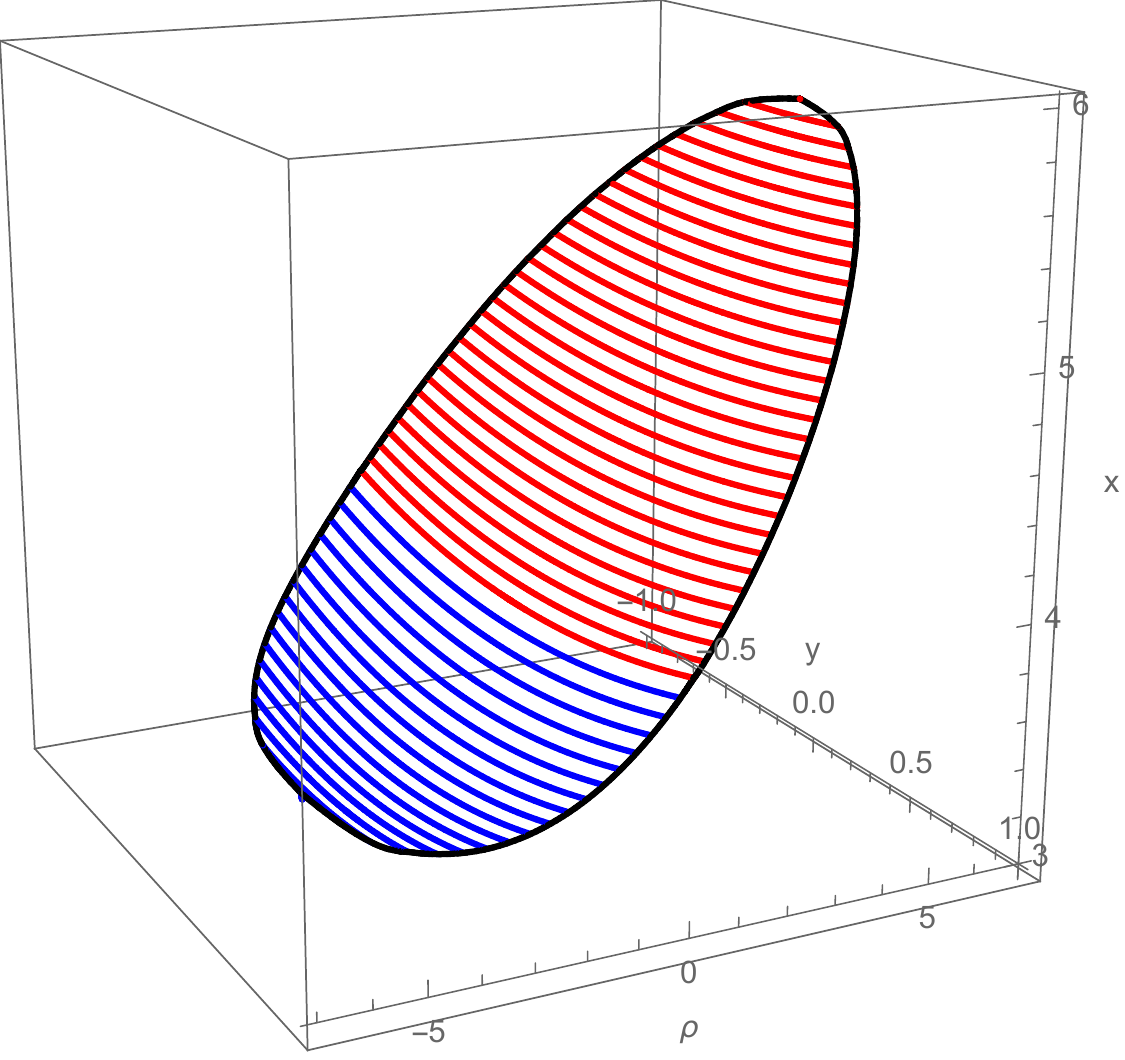} 
		\label{PRZ13D}
 } 
\subfloat[][$J=0.9$, $PF(\rho)$]{
  	\includegraphics[scale=0.32]{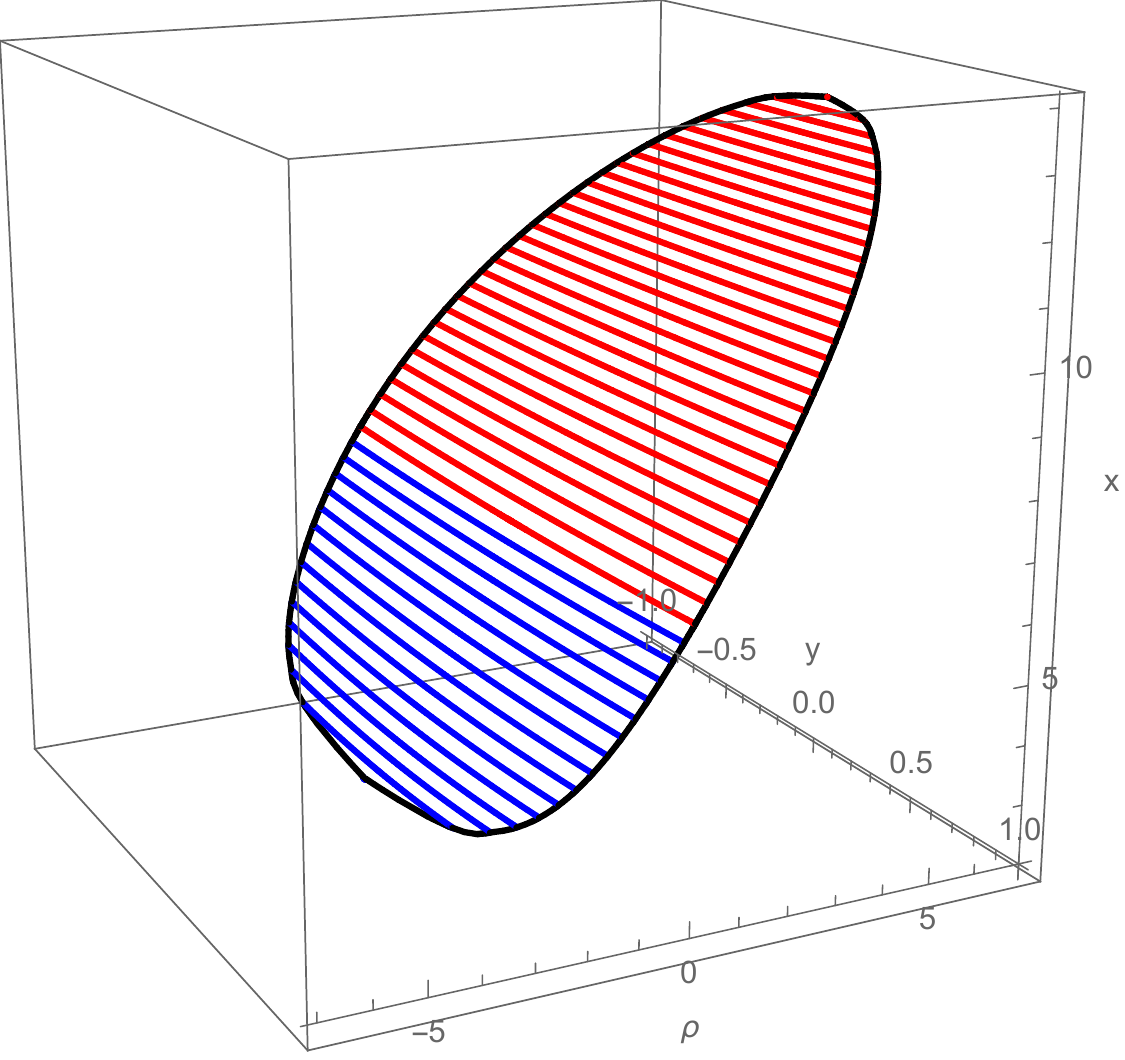} 
		\label{PRZ23D}
 }
\\
\subfloat[][$J=0.1;0.5;0.9$]{
  	\includegraphics[scale=0.32]{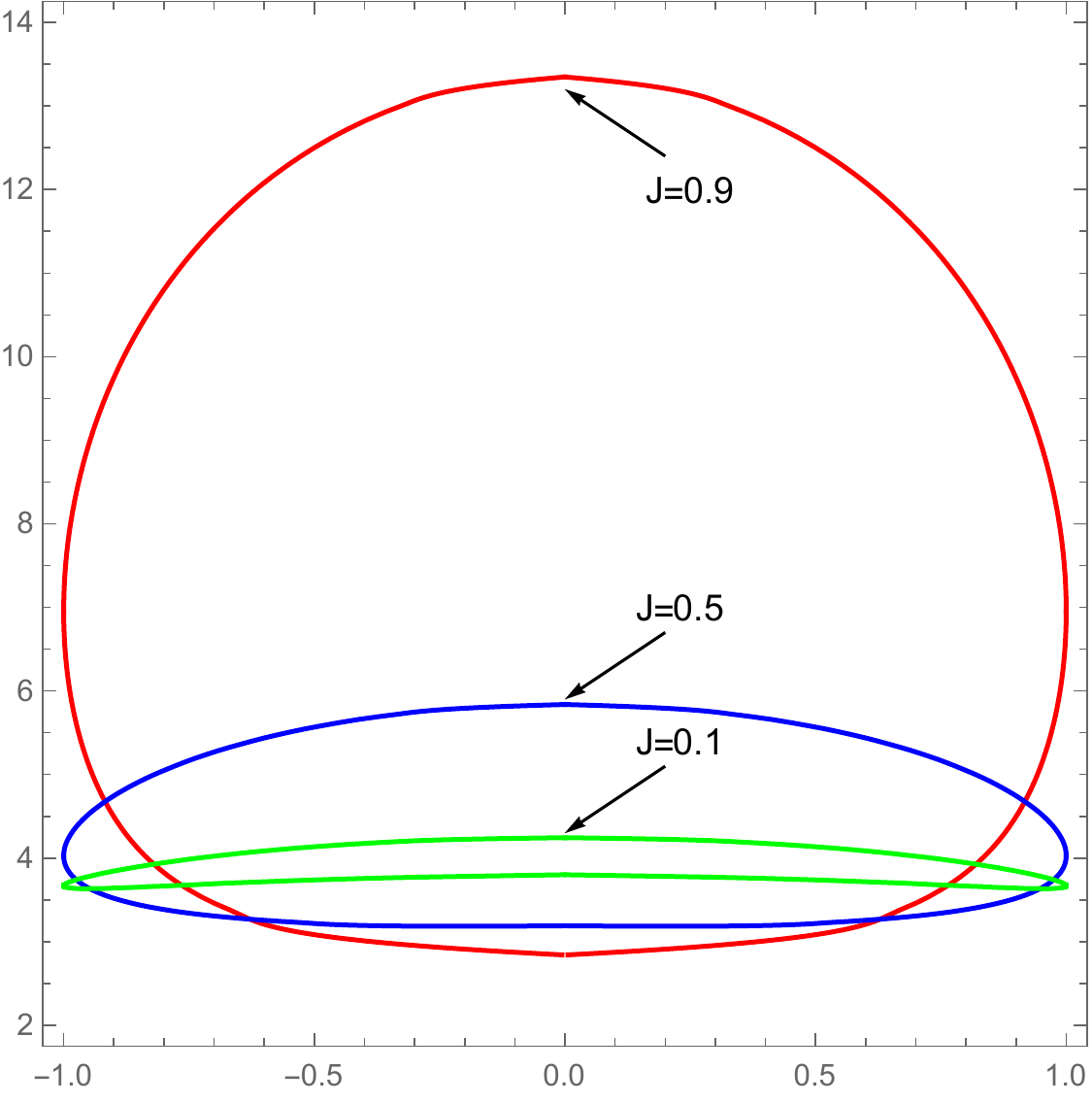} 
		\label{PRZK}
 }
\caption{$PR$ and $PF(\rho)$ in Tomimatsu-Sato metric ($M=1$, $J=0.1;0.5;0.9$).}.
\label{TS1}
\end{figure*}

\subsection{Tomimatsu-Sato}
\label{SS9}
The spacetime metric (\ref{a30}) is
\begin{align}
&\alpha^2=\frac{A}{B}+\frac{16q^2(1-y^2)C^2}{BD}, \quad  \gamma^2=\frac{\sigma^2(1-y^2)D}{p^2B},  \quad \omega=\frac{4 p q C}{D}, \\
&\psi^2=\frac{B}{p^4(x^2-y^2)^3(x^2-1)}, \quad  \phi^2=\frac{B}{p^4(x^2-y^2)^3(1-y^2)}.
\end{align}
Here the polynomial function $A,B,C,D$  are rather cumbersome and can be found, for example, in \cite{Kodama:2003ch}, where a detailed analysis of this metric is also carried out. The ADM mass and the angular momentum are $M=2\sigma /p$ and $J=M^2q$, respectively. The TS solution has an important feature - the area of causality violation in the external region $x>1$. As in Kerr and Zipoy-Voorhees, the causal region $P_\rho$ contains both singularity and spatial infinity if and only if $\rho_{min}<\rho<\rho_{max}$. However, now the maximum and minimum values do not have a simple analytical expression and are obtained from the formulas (\ref{a8}) numerically. 

$PR$ and $PF(\rho)$ are obtained by the same method as before, using the equation (\ref{a5}), and the boundary conditions (\ref{a1}) on the boundary of the causal region $\partial P_\rho$. Note that in the field of causality violation we must use the general equation (\ref{a10}) instead of (\ref{a5}), however, in the examples considered by us, such a need does not arise. The resulting solution for $PR$ and $PF(\rho)$ is shown in the Figs. \ref{TS1}. It is essential that the fundamental photon  orbits are non-spherical, thus the solution of the TS is also a non-integrable dynamical system \cite{Pappas:2018opz, Glampedakis:2018blj}.

$PR$ and $PF(\rho)$ for small rotations $J = 0.1$ resembles the $PR$ and $PF(\rho)$ for Zipoy-Voorhees solutions Fig. \ref{PRTS1}, \ref{PRTS3D}. For each allowed value of $\rho_{min}<\rho<\rho_{max}$ there is a well-defined fundamental photon hypersurface that is curved downward. Thus, like Zipoy-Voorhees, the solution will induce a set of relativistic images \cite{Virbhadra:2008ws}. The function $PF(\rho)$ in this case interpolates between the univalent and two-fold types and is purely “four-dimensional”. As a result, the minimum size of the shadow may not fall either on the equatorial plane or on the vertical axis of the shadow between them, since the fundamental photon surface is minimal for some intermediate value of the impact parameter. Therefore, the shadow will be a slightly asymmetric analog of the flattened shadow of Zipoy-Voorhees \cite{Abdikamalov:2019ztb, Galtsov:2019fzq}.

For intermediate values of rotation parameter $J=0.5$, the deformation of fundamental photon hypersurfaces decreases Figs. \ref{PRZ1}, \ref{PRZ2}, and the function $PF(\rho)$ is practically univalent. For large rotations,  $J = 0.9$, the fundamental photon hypersurfaces approaches the `` spherical '' ones $x={\rm const}$, and $PF(\rho)$ is univalent, which more closely resembles the Kerr solution Figs. \ref{PRZ13D}, \ref{PRZ23D}, in particular the minimum and maximum size of the shadow falls on the equatorial plane as it was in Kerr. Thus, we can expect that the shadow in this case will more closely resemble that in Kerr. These properties of the shadow of the solution of the TS really take place, as was demonstrated in \cite{Bambi:2010hf},  confirming  effectiveness of the  geometric constructions introduced here.
 
\setcounter{equation}{0}

\section{Conclusion}
\label{Concl}
This paper defines new geometric notions - the fundamental photon hypersurface and the fundamental photon region, generalizing the notion of the classical photon surface and the photon region to the case of stationary axially symmetric spaces with a complex, generically non-integrable, geodesic structure. They are based on the restriction of the umbilical condition on a certain naturally defined submanifold of the phase space  \cite{Cederbaum:2019vwv}.

These notions naturally complement the concept of fundamental photon orbits, supplying them with new geometric interpretation and the mathematical tools of the geometry of submanifolds. We formulate key theorems on the connection of the introduced hypersurfaces with the behavior of beams of null geodesics and derive structural equations for the principal curvatures of their spatial sections. We hope that the geometric objects and the formalism introduced by us will open the way for obtaining new topological restrictions, Penrose-type inequalities\cite{Shiromizu:2017ego,Feng:2019zzn,Yang:2019zcn}, uniqueness theorems\cite{Cederbaum,Yazadjiev:2015hda,Yazadjiev:2015mta,Yazadjiev:2015jza,Rogatko,Cederbaumo}, similar to ones for photon spheres and transversaly trapping surfaces \cite{Yoshino1,Yoshino:2019dty,Yoshino:2019mqw}. In particular, for the hypersurface $S_0$ under some additional assumptions we established the spherical topology $\mathbb S^2$.

In the second part of the paper, we introduced the concept of the fundamental photon function $PF(\rho)$ whose image is a classical photon region, and illustrated the application of our technique on the examples of Kerr, Zipoy -Voorhees and Tomimatsu-Sato in spheroidal coordinates. We found, in particular, that $PF(\rho)$ is some smooth function, which for small rotation  parameter resembles $PF(\rho) $ Zipoy-Voorhees ($PF(\rho)$ is two-sheeted), and for large rotation parameter -- Kerr solution ($PF(\rho)$ is univalent) and accordingly has an intermediate structure giving a new geometric justification of the optical properties of the shadow of the solution \cite{Bambi:2010hf}.

\begin{acknowledgements}
The work was supported by the Russian Foundation for Basic Research on the project 19-32-90095.
\end{acknowledgements}
  

\end {document}